\documentclass[pra,twocolumn,a4paper,nofootinbib,showpacs]{revtex4-1}

\usepackage{hyperref}

\newcommand{\papertitle}{Non-negative subtheories and quasiprobability 
representations of qubits}

\hypersetup{
 pdfauthor={Joel J. Wallman},
	pdftitle={\papertitle}
}
\usepackage{graphicx}
\usepackage{bm}
\usepackage{amsmath}
\usepackage{amssymb}
\usepackage{amsthm}
\usepackage{color}
\usepackage{ulem}

\theoremstyle{plain}
\newtheorem{thm}{Theorem}[section]
\newtheorem{lem}[thm]{Lemma}

\theoremstyle{definition}

\theoremstyle{remark}

\definecolor{nblue}{rgb}{0.2,0.2,0.7}
\definecolor{ngreen}{rgb}{0.1,0.5,0.1}
\definecolor{nred}{rgb}{0.8,0.2,0.2}
\definecolor{nblack}{rgb}{0,0,0}

\newcommand{\unit}{\mathbf{1}}
\newcommand{\supp}[1]{\mathcal{S}\left(#1\right)}
\newcommand{\supps}[1]{\mathcal{S}\left[#1\right]}
\newcommand{\mc}[1]{\mathcal{#1}}

\newcommand{\mbb}[1]{\mathbb{#1}}

\newcommand{\ket}[1]{|#1\rangle}

\newcommand{\tr}[1]{\text{Tr}\left(#1\right)}
\newcommand{\trs}[1]{\text{Tr}\left[#1\right]}

\begin{document}

\title{\papertitle}

\author{Joel J. \surname{Wallman}}
\affiliation{Centre for Engineered Quantum Systems, School of Physics, The 
University of Sydney, Sydney, NSW 2006, Australia}
\author{Stephen D. \surname{Bartlett}}
\affiliation{Centre for Engineered Quantum Systems, School of Physics, The 
University of Sydney, Sydney, NSW 2006, Australia}

\date{\today}

\begin{abstract}
Negativity in a quasiprobability representation is typically interpreted as 
an indication of nonclassical behavior. However, this does not preclude 
states that are non-negative from exhibiting phenomena typically associated 
with quantum mechanics---the single qubit stabilizer states have non-negative 
Wigner functions and yet play a fundamental role in many quantum information 
tasks. We seek to determine what other sets of quantum states and 
measurements for a qubit can be non-negative in a quasiprobability 
representation, and to identify nontrivial unitary groups that permute the 
states in such a set. These sets of states and measurements are analogous to 
the single qubit stabilizer states. We show that no quasiprobability 
representation of a qubit can be non-negative for more than four bases and 
that the non-negative bases in any quasiprobability representation must satisfy 
certain symmetry constraints. We provide an exhaustive list of 
the sets of single qubit bases that are non-negative in some 
quasiprobability representation and are also permuted by a nontrivial unitary group. 
This list includes two families of three bases that both include the 
single qubit stabilizer states as a special case and a family of 
four bases whose symmetry group is the Pauli group. For higher 
dimensions, we prove that there can be no more than $2^{d^2}$ states in 
non-negative bases of a $d$-dimensional Hilbert space in any quasiprobability 
representation. Furthermore, these bases must satisfy certain symmetry 
constraints, corresponding to requiring the bases to be sufficiently 
complementary to each other.
\end{abstract}

\pacs{03.65.Ta, 03.67.-a, 03.65.Sq}

\maketitle

\section{Introduction}

As an alternative to the standard formulation of quantum theory in terms of 
vectors in a Hilbert space, it is possible to express quantum states, 
transformations, and measurements as functions on some state space. The most 
common of such representations is the Wigner function~\cite{Wigner1971}, 
which represents the quantum state of a particle as a distribution over the 
classical phase space of the particle. However, this function cannot be 
interpreted as a probability distribution as it takes on negative values. 
Such descriptions are referred to as \textit{quasiprobability representations}~\cite{Ferrie2011}. 

The occurrence of negative probabilities in the description of a quantum 
state or measurement is often thought of as an indication of 
``quantum-ness''~\cite{Kenfack2004}. Conversely, any state or measurement
that can be described by non-negative (true) probabilities is sometimes 
said to be ``classical''. For the Wigner function of particle mechanics,
Hudson's theorem shows that a pure quantum state has a non-negative 
Wigner function if and only if it is a 
Gaussian state~\cite{Hudson1974,Soto1983}. The discrete Wigner function 
extends these results to finite-dimensional quantum systems. 
For finite odd dimensions, the only pure states with non-negative 
Wigner functions are stabilizer states~\cite{Gross2006,Gross2007}.
Negative distributions, on the other hand, are nonclassical in the 
sense that they are contextual~\cite{Spekkens2008} and can serve as a
resource for quantum computation~\cite{Veitch2012}.

However, preparing a system in a quantum state described by a non-negative 
Wigner function is neither a necessary or sufficient condition to say that 
this system is classical. It is not necessary, because the Wigner function 
is only one possible quasiprobability representation of quantum theory. One 
can construct a quasiprobability representation in which any individual state 
or measurement has a non-negative distribution. It is also not sufficient, as 
it is possible to make all quantum states have non-negative distributions, but 
this forces the conditional probabilities of some measurements to take on 
negative values~\cite{Spekkens2008,Ferrie2008}. To say that a system has a 
classical description, we require the set of preparations, transformations, 
and measurements we are considering to all have nonnegative distributions in 
a quasiprobability representation. Consequently, we aim to determine what 
\textit{subtheories} of quantum mechanics (i.e., theories constructed from 
closed subsets of quantum states, transformations and measurements) are 
non-negative in some quasiprobability representation, or, alternatively, can 
be described ``classically''. 

In this paper, we consider a particular class of subtheories containing 
orthonormal bases both as a set of preparations and as a projective 
measurement. For a fixed quasiprobability representation, if the 
distributions corresponding to the states in an orthonormal basis and the 
projective measurement in the basis are all non-negative, we refer to the 
basis as \textit{non-negative}. 

The question we are then considering is what sets of bases can be 
simultaneously non-negative in some quasiprobability representation. 
Additionally, we would like to determine what groups of unitary 
transformations permute a set of non-negative bases, as it then seems likely 
that circuits composed of such gates can be efficiently simulated. While our 
results are primarily applicable to single systems, we believe that these 
results may be important in the pursuit of additional classically-simulatable 
subtheories of multiple systems. At present, we know of very few nontrivial 
sets of quantum gates that can be efficiently simulated on a classical 
computer, such as the Clifford group and matchgate circuits. Finding other 
subsets of gates that are also efficiently simulatable may provide key 
insights into the nature of quantum computation and the origin of any 
advantage of quantum computation over classical computation.

While our results are derived in the context of single systems, it is 
important to note that several interesting questions remain on the 
``quantum'' nature of single systems, as demonstrated by a range of phenomena 
such as the proof of contextuality for a single qubit~\cite{Spekkens2005}, 
models of quantum computation that use only one ``clean'' qubit~\cite{Shor2008}
and recent results on the distillability of magic states~\cite{Veitch2012}. 
We believe that our results show some promise in answering such questions.

This paper is structured as follows. We begin by introducing quasiprobability 
representations of quantum mechanics in Sec.~\ref{sec:quasi} and prove some 
elementary properties that any quasiprobability representation of quantum 
mechanics must satisfy. We obtain an upper bound on the number of 
non-negative bases of a qubit for any quasiprobability representation and 
relations that any non-negative bases of a qubit must satisfy in 
Sec.~\ref{sec:qubit_bound}. We consider examples of quasiprobability 
representations that are non-negative for subtheories of a qubit with 
nontrivial transformations in Sec.~\ref{sec:examples}. In 
Sec.~\ref{sec:higher_dimensions}, we generalize the theorems in 
Sec.~\ref{sec:qubit_bound} to $d$ dimensions and obtain an upper bound on 
the number of states that are elements of non-negative bases for any 
finite $d$. We conclude with a discussion on the implications for 
quantum computation in Sec.~\ref{negativity:conclusion}.

\section{Ontological models and quasiprobability representations}
\label{sec:quasi}

In this section we motivate the study of quasiprobability representations, 
and non-negative subtheories within them, from a foundational perspective 
based on ontological models. We present our formalism for quasiprobability 
representations, and prove some elementary properties that any such 
representation of quantum mechanics must satisfy.

One approach to explaining the predictions of quantum mechanics is to 
formulate an ontological model (i.e., a hidden variable model) that 
reproduces some or all of the measurement statistics of quantum theory. Such 
a model is defined over some ontic state space $\Lambda$. Preparations of 
quantum states correspond to probability measures $\mu$ over $\Lambda$. 
Measurements correspond to sets of conditional measurement probabilities $\xi(
k|\lambda)$ of observing an outcome $k$ given that a system is in the ontic 
state $\lambda\in\Lambda$, respectively. A desirable feature of such a model 
is a revised notion of noncontextuality~\footnote{While there are ontological 
models that do not satisfy the revised assumption of noncontextuality, e.g., 
the models in~\cite{bb_model,Montina2006}, we do not consider them here.}, so 
that each preparation as a density operator corresponds to a single 
probability distribution and each measurement effect corresponds to a unique 
conditional probability~\cite{Spekkens2005}.

While no noncontextual ontological model can reproduce all of quantum 
mechanics, it may be possible to define a noncontextual ontological model for 
a subtheory. Such a subtheory may still capture some of the essential 
phenomena of quantum mechanics. For example, the set of states, 
transformations, and measurements with Gaussian Wigner distributions in 
particle mechanics can be described by a noncontextual ontological model~\cite{Bartlett2012}. 
For finite odd dimensions, the single qudit subtheory 
consisting of stabilizer states and measurements has an ontological model, 
namely the discrete Wigner function~\cite{Gibbons2004} restricted to this 
set~\cite{Gross2006,Gross2007}. Spekkens' toy theory~\cite{Spekkens2007} also 
provides an ontological model that is in many ways analogous to the 
stabilizer subtheory of a single qubit. Despite being ``classical'', these 
models allow a variety of information processing tasks typically associated 
with quantum mechanics, such as quantum teleportation and dense coding; 
see~\cite{Bennett1993,Bennett1992,Spekkens2007,Weedbrook2012,Bartlett2012}.

An alternate but related approach to explaining quantum mechanical 
predictions is to use a quasiprobability representation. As with an 
ontological model, a quasiprobability representation is defined over a state 
space $\Lambda$ (often, but not necessarily, a classical phase space) which 
can be interpreted as an ontic state space, and it associates preparations 
and measurements with distributions and conditional distributions, 
respectively, over $\Lambda$. A quasiprobability representation is a faithful 
representation of the density operators and measurement effects of quantum 
theory---that is, the map from operators on Hilbert space to distributions on 
$\Lambda$ is linear and injective. Such representations can maintain 
noncontextuality but nevertheless reproduce the quantum predictions because, 
unlike ontological models, the distributions corresponding to such 
preparations and measurements are allowed to take on negative values (thus 
the term ``quasiprobability''). That is, nonnegativity of distributions on 
ontic states is the classical assumption dropped by quasiprobability 
representations in order to reproduce the predictions of quantum theory.

A quasiprobability representation of quantum mechanics cannot be non-negative 
for all preparations and measurements; this fact is equivalent to the fact 
that an ontological model of quantum mechanics cannot be 
noncontextual~\cite{Spekkens2008}. However, a quasiprobability representation 
can be non-negative for a subtheory of quantum mechanics; specifically, the 
preparations, transformations and measurements within the subtheory can all 
possess non-negative probability distributions. In such cases, the quasiprobability 
representation provides a noncontextual ontological model for this subtheory. 
That is, the existence of a noncontextual ontological model for a subtheory 
and a quasiprobability representation that is non-negative for a subtheory 
are equivalent notions~\cite{Spekkens2008}. Throughout this paper, we will 
restrict our language (for the most part) to that of quasiprobability 
representations and the possible existence of non-negative subtheories within 
them, although the reader should keep in mind that results for non-negative 
subtheories in quasiprobability representations apply identically to a 
perspective of noncontextual ontological models for such subtheories.

An example of a quasiprobability representation frequently used in quantum 
optics is the Wigner function. The subtheory of quantum mechanics consisting 
of Gaussian states, transformations and measurements, is completely described 
by a noncontextual ontological model for which the probability distributions 
and conditional measurement probabilities are all 
non-negative~\cite{Bartlett2012}. This subtheory is embedded in a 
quasiprobability representation, describing all possible states, 
transformations, and measurements, but using negative probabilities
for non-Gaussian ones.

A quasiprobability representation of quantum mechanics over a space $\Lambda$ 
is defined by two sets of Hermitian operators, $\{F(\lambda)\}$ and $\{G(
\lambda)\}$, acting on a $d$-dimensional Hilbert space 
$\mc{H}_d$~\cite{Ferrie2009,Spekkens2008}. The sets $\{F(\lambda)\}$ 
and $\{G(\lambda)\}$ are dual frames for the space of operators acting 
on $\mc{H}_d$~\cite{Ferrie2009}. The quasiprobability distribution 
associated with a quantum state $\rho$ is
\begin{align}\label{eq:quasi_F}
\mu_{\rho}(\lambda) &= \trs{\rho F(\lambda)}\in\mathbb{R}	\,.
\end{align}
The \textit{support} of a state $\rho$ is the set
\begin{align}
\supp{\rho} &= \{\lambda\in\Lambda:\mu_{\rho}(\lambda) \neq 0\}	\,. \label{
eq:suppP}
\end{align}
A point $\lambda$ is \textit{compatible} with a quantum state $\rho$ if 
$\lambda\in\supp{\rho}$ and \textit{incompatible} with $\rho$ otherwise. 

For measurements, the conditional quasiprobability of an effect $E$ (i.e., an 
element of a POVM) occurring if the system is in the state $\lambda$ is given 
by an indicator function,
\begin{align}\label{eq:quasi_G}
\xi_{E}(\lambda) &= \trs{E G(\lambda)}\in\mathbb{R}	\,.
\end{align}
As a system is always in some state, we require $\mu$ to be normalized, i.e., 
\begin{align}\label{eq:ontic_normalized_state}
\int_{\Lambda}d\lambda\, \mu_{\rho}(\lambda) = 1
\end{align}
for all states $\rho$. Similarly, as some outcome of a measurement always 
occurs, we require
\begin{align}\label{eq:completeness}
\sum_j \xi_{E_j}(\lambda)=1
\end{align}
for all $\lambda\in\Lambda$ and all POVMs $\{E_j\}$. In order to reproduce 
the Born rule, $\mu$ and $\xi$ must satisfy 
\begin{align}\label{eq:ontic_probability}
\tr{\rho E} = \int_{\Lambda}d\lambda\, \mu_{\rho}(\lambda)\xi_{E}(\lambda)	\,.
\end{align}
for all states $\rho$ and all POVMs $\{E_j\}$.

While negative values of $\mu$ and $\xi$ are allowed, there may be states 
$\rho$ or effects $E$ such that $\mu_{\rho}(\lambda)\geq 0$ or 
$\xi_{E}(\lambda)\in [0,1]$ for all $\lambda\in\Lambda$, respectively. 
Such states and effects are referred to as \textit{non-negative}.

In quantum mechanics, there is a one-to-one correspondence between 
orthonormal bases of $\mc{H}_d$ and projective measurements, as any 
orthonormal basis $\{\rho(j):j\in\mbb{Z}_d\}$ corresponds to a set of 
preparations \textit{and} to the effects for a projective measurement. 
Motivated by this correspondence, we define a \textit{non-negative basis} as 
an orthonormal basis $\{\rho(j):j\in\mbb{Z}_d\}$ such that for all $j\in\mbb{Z
}_d$ and $\lambda\in\Lambda$,
\begin{subequations}\begin{align}
\mu_{\rho(j)}(\lambda)&\ge 0	\,,\\
\xi_{\rho(j)}(\lambda)&\in[0,1]	\,.
\end{align}\end{subequations}
That is, each state in a non-negative basis is a non-negative state and the 
projective measurement corresponding to the non-negative basis is a 
non-negative measurement. Considering non-negative bases as corresponding 
to both a basis of non-negative states and a non-negative projective 
measurement is the fundamental tool that we will exploit to obtain 
the results of this paper. Note that when we consider non-negative bases,
we only consider orthonormal bases of pure states.

We are particularly interested in subtheories of quantum mechanics that 
contain non-negative bases and a nontrivial group of transformations that 
permute the non-negative bases. For qubits, we will establish upper bounds on 
the number of non-negative bases and some relations any non-negative bases 
must satisfy. We will also completely classify the possible sets of 
non-negative bases that are closed under a nontrivial unitary group. 
For qudits, we will only undertake the first task (i.e., establish upper 
bounds on the number of non-negative bases and relations between any 
non-negative bases). 

We begin by establishing some properties that any quasiprobability 
representation must satisfy. These properties were all proven in 
Ref.~\cite{Spekkens2005}, but are included here for completeness. 
We first prove that the supports $\supps{\rho(j)}$ of the states in a 
non-negative basis $\{\rho(j):j\in\mbb{Z}_d\}$ must be disjoint. 
Furthermore, for any $\lambda$ that is compatible with one of the 
basis states, the indicator functions $\xi_{\rho(j)}(\lambda)$ are 
outcome deterministic and correspond to answering the question 
``is $\lambda$ compatible with $\rho(j)$?''. 

\begin{lem}\label{lem:deterministic_indicator}
Let $\{\rho(j):j\in\mbb{Z}_d\}$ be a non-negative basis of $\mc{H}_d$ in a 
given quasiprobability representation. Then the supports $\{\supps{\rho(j)}:j
\in \mbb{Z}_d\}$ are disjoint, and for all $j,k\in\mbb{Z}_d$ we have
\begin{align}\label{eq:deterministic_indicator}
\xi_{\rho(j)}(\lambda) = \delta_{j,k}\ \forall\lambda\in\supps{\rho(k)}	\,.
\end{align}
\end{lem}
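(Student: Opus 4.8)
The plan is to exploit the dual role of a non-negative basis---its elements are simultaneously orthonormal pure states and the effects of a projective measurement---by inserting the orthonormality relations $\tr{\rho(k)\rho(j)}=\delta_{j,k}$ into the Born rule~\eqref{eq:ontic_probability}. Choosing the preparation to be $\rho(k)$ and the effect to be $\rho(j)$, I would obtain
\begin{align}\label{eq:prop_born}
\int_\Lambda d\lambda\,\mu_{\rho(k)}(\lambda)\,\xi_{\rho(j)}(\lambda)=\delta_{j,k}\,.
\end{align}
Because the basis is non-negative, the integrand is a product of $\mu_{\rho(k)}(\lambda)\ge 0$ and $\xi_{\rho(j)}(\lambda)\in[0,1]$, and is therefore non-negative at every $\lambda\in\Lambda$. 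This sign structure is what makes the vanishing integral in the off-diagonal case so restrictive, and it is the engine of the whole argument.

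First I would dispose of the off-diagonal entries. Fix $k$ and take any $j\neq k$, so that the right-hand side of \eqref{eq:prop_born} is zero. A non-negative integrand with zero integral must vanish pointwise, hence $\mu_{\rho(k)}(\lambda)\,\xi_{\rho(j)}(\lambda)=0$ for every $\lambda$. At any $\lambda\in\supps{\rho(k)}$ the first factor is non-zero by the definition of support~\eqref{eq:suppP}, so the second must vanish: $\xi_{\rho(j)}(\lambda)=0$ for all $j\neq k$. This is precisely the off-diagonal half of \eqref{eq:deterministic_indicator}.

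Next I would pin down the diagonal entry using completeness. For the projective measurement in the basis, \eqref{eq:completeness} gives $\sum_{j}\xi_{\rho(j)}(\lambda)=1$ at every $\lambda$. Restricting to $\lambda\in\supps{\rho(k)}$ and substituting the off-diagonal result just obtained, every term with $j\neq k$ drops out and we are left with $\xi_{\rho(k)}(\lambda)=1$. Combining the two cases yields exactly $\xi_{\rho(j)}(\lambda)=\delta_{j,k}$ for all $\lambda\in\supps{\rho(k)}$. Disjointness of the supports is then immediate by contradiction: were some $\lambda$ to lie in both $\supps{\rho(k)}$ and $\supps{\rho(m)}$ with $k\neq m$, membership in $\supps{\rho(k)}$ would force $\xi_{\rho(m)}(\lambda)=0$, while membership in $\supps{\rho(m)}$ would force $\xi_{\rho(m)}(\lambda)=1$.

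The only delicate point is the inference ``non-negative integrand with vanishing integral implies pointwise vanishing''. On a discrete or finite ontic space $\Lambda$ the integral is a sum and this is immediate; on a continuous $\Lambda$ one obtains vanishing only up to a set of measure zero, so the supports should be read as defined modulo null sets (or one restricts to representations with sufficiently regular frames). I expect this measure-theoretic caveat to be the only genuine subtlety, and it does not disturb any of the counting or symmetry results that are built on top of the lemma.
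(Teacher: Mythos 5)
Your proof is correct and follows essentially the same route as the paper: insert the orthonormality relation into the Born rule and use non-negativity of the integrand to force the indicator functions to be deterministic, with the same measure-zero caveat the paper relegates to a footnote. The only (inessential) difference is that you obtain the diagonal value $\xi_{\rho(k)}(\lambda)=1$ from the completeness condition~\eqref{eq:completeness}, whereas the paper gets it directly from the normalization of $\mu_{\rho(k)}$ together with $\xi\in[0,1]$ in the $j=k$ instance of the Born rule.
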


\begin{proof}
Let $\{\rho(j):j\in\mbb{Z}_d\}$ be a non-negative basis. As non-negative 
bases are orthonormal by definition, $\trs{\rho(j)\rho(k)} = \delta_{j,k}$. 
Consequently, for the preparation $\rho(j)$ followed by a measurement in this 
basis, the Born rule gives
\begin{equation}
\int_{\supps{\rho(j)}}d\lambda\, \mu_{\rho(j)}(\lambda)\xi_{\rho(k)}(\lambda)
= \delta_{j,k}	\,.
\end{equation}
for all $j,k\in\mbb{Z}_d$. As $\mu_{\rho(j)}(\lambda)\geq 0$ and is 
normalized, and $\xi_{\rho(k)}(\lambda)\in[0,1]$, the only solution\footnote{
Here and elsewhere, except possibly on a set of measure zero.} is as in 
Eq.~\eqref{eq:deterministic_indicator}.
\end{proof}

Quasiprobability representations are convex-linear (i.e., they are affine 
maps, preserving convex combinations of states and 
measurements)~\cite{Ferrie2009}. We now show that,
for a fixed quasiprobability representation with at least one 
non-negative basis, convex-linearity implies that any point $\lambda$
is either compatible with exactly one element of each non-negative basis, 
or is incompatible with all elements of every non-negative basis. 
Any points $\lambda$ that are incompatible with all elements of every
non-negative basis are irrelevant to the description of the non-negative 
subtheory, so we define the space $\Lambda_{*}\subseteq\Lambda$ 
by deleting such points. Furthermore, we also show that there exists a 
unique function $q:\Lambda_{*}\to \mathbb{R}^{+}$ such that any element 
that assigns nonzero probability to $\lambda$ assigns probability 
$q(\lambda)$. This unique value is given by 
$q(\lambda) = d \cdot \mu_{\frac{1}{d}\unit}(\lambda)$, that is, $d$ 
times the probability that the maximally-mixed state assigns to the 
ontic state $\lambda$.

\begin{lem}\label{lem:sum_prob}
For any quasiprobability representation of $\mc{H}_d$ in which 
there is at least one non-negative basis, there exists a unique function 
$q:\Lambda\to\mathbb{R}^{+}\cup\{0\}$ such that for every 
non-negative basis $\{\rho(j)\}$, $\Lambda$ can be 
partitioned into $d+1$ disjoint regions 
$\{\supps{\rho(j)},\Lambda_0:j\in\mbb{Z}_d\}$ such that
\begin{align}
\mu_{\rho(j)}(\lambda) &= \begin{cases} \delta_{j,k}q(\lambda) & 
\forall\lambda\in\supps{\rho(k)}\,,	\\ 0 & \forall\lambda\in\Lambda_0	\,,\end{cases}
\end{align}
for all $j,k\in\mbb{Z}_d$, where $\Lambda_0=\{\lambda:q(\lambda)=0\}$.
\end{lem}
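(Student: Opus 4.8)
The plan is to define the function $q$ directly in terms of the maximally-mixed state, so that it is manifestly independent of any basis, and then to show that this single function has all the required properties for every non-negative basis at once. The starting observation is that for any orthonormal basis $\{\rho(j):j\in\mbb{Z}_d\}$ one has $\frac{1}{d}\unit = \frac{1}{d}\sum_j \rho(j)$, a convex combination of the basis states. Since a quasiprobability representation is convex-linear, I would apply the representation to both sides to obtain
\begin{align}
\mu_{\frac{1}{d}\unit}(\lambda) = \frac{1}{d}\sum_{j}\mu_{\rho(j)}(\lambda)\,,
\end{align}
which is valid for every orthonormal basis. I would then define $q(\lambda) = d\,\mu_{\frac{1}{d}\unit}(\lambda)$; because the left-hand side refers only to the maximally-mixed state, this single function is the same object regardless of which basis is used, and for any non-negative basis it equals $\sum_j \mu_{\rho(j)}(\lambda)$.

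With this definition the remaining claims follow quickly. First, for a non-negative basis each $\mu_{\rho(j)}(\lambda)\ge 0$, so $q(\lambda)=\sum_j\mu_{\rho(j)}(\lambda)\ge 0$, establishing $q:\Lambda\to\mbb{R}^{+}\cup\{0\}$. Next I would invoke Lemma~\ref{lem:deterministic_indicator}, which guarantees the supports $\supps{\rho(j)}$ are disjoint: for $\lambda\in\supps{\rho(k)}$ every term with $j\ne k$ vanishes, so $q(\lambda)=\mu_{\rho(k)}(\lambda)$ and hence $\mu_{\rho(j)}(\lambda)=\delta_{j,k}q(\lambda)$ there. Finally I would characterize $\Lambda_0$: since the summands are non-negative, $q(\lambda)=0$ holds precisely when $\mu_{\rho(j)}(\lambda)=0$ for all $j$, i.e.\ precisely when $\lambda$ lies outside every support. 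This identifies $\{\lambda:q(\lambda)=0\}$ with the complement of $\bigcup_j\supps{\rho(j)}$, so the regions $\{\supps{\rho(j)},\Lambda_0\}$ are disjoint and cover $\Lambda$, and $\mu_{\rho(j)}(\lambda)=0$ on $\Lambda_0$ by construction.

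For uniqueness, I would argue that any function satisfying the stated conditions is forced to agree with $q$: on each $\supps{\rho(k)}$ the case $j=k$ of the formula fixes its value to $\mu_{\rho(k)}(\lambda)$, and on the complement of the supports the partition requirement places $\lambda$ in $\Lambda_0=\{\lambda:q(\lambda)=0\}$ and hence forces the value $0$; since these regions exhaust $\Lambda$, the function is determined everywhere. The conceptually important point---and the part I would state carefully rather than the routine algebra---is the basis-independence: the same $q$, and in particular the same null set $\Lambda_0$, works for every non-negative basis simultaneously. This is exactly what the common decomposition $\frac{1}{d}\unit=\frac{1}{d}\sum_j\rho(j)$ buys us, and it says that all non-negative bases share a single ``background'' region $\Lambda_0$ on which every basis state assigns zero weight. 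I would also carry along the measure-zero caveat of the earlier footnote, since the equalities hold only up to sets of measure zero.
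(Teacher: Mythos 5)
Your proposal is correct and follows essentially the same route as the paper: defining $q(\lambda) = d\,\mu_{\frac{1}{d}\unit}(\lambda)$ via convex-linearity applied to $\frac{1}{d}\unit = \frac{1}{d}\sum_j\rho(j)$, and invoking Lemma~\ref{lem:deterministic_indicator} for the disjointness of supports. You spell out the uniqueness and partition claims more explicitly than the paper does, but the underlying argument is identical.
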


\begin{proof}
Let $\{\rho(j)\}$ be an orthonormal basis of $\mc{H}_d$. Then the maximally 
mixed state can be written as
\begin{align}
\frac{1}{d}\unit = \frac{1}{d}\sum_j \rho(j)	\,.
\end{align}
As the quasiprobability representation is convex-linear,
\begin{align}\label{eq:convex-linear_distribution}
\mu_{\frac{1}{d}\unit}(\lambda) = \frac{1}{d}\sum_j \mu_{\rho(j)}(\lambda)
\end{align}
for all $\lambda\in\Lambda$. Set $q(\lambda) = d \cdot \mu_{\frac{1}{d}\unit}(
\lambda)$.

For any non-negative basis $\{\rho(j)\}$, each term in the right-hand side of 
Eq.~\eqref{eq:convex-linear_distribution} is non-negative by definition and 
so $q(\lambda)$ must be non-negative. Furthermore, any point $\lambda\in\Lambda$ 
is compatible with at most one element of any non-negative 
basis by Lemma~\ref{lem:deterministic_indicator}.
\end{proof}

We now show that two quantum states that are elements of non-negative bases 
have disjoint supports if and only if they are orthogonal quantum states.

\begin{lem}\label{lem:common_support}
Let $\rho$ and $\omega$ be elements of (possibly identical) non-negative 
bases in a quasiprobability distribution. Then $\supp{\rho}\cap\supp{\omega}=
\emptyset$ if and only if $\rho$ and $\omega$ are orthogonal.
\end{lem}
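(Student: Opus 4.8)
The plan is to connect the supports (defined through the state distributions $\mu$) to orthogonality (which enters through the Born rule, Eq.~\eqref{eq:ontic_probability}) by way of the effect distributions $\xi$. Write $\rho$ as an element of its non-negative basis and $\omega=\omega(k)$ as an element of its non-negative basis $\{\omega(j)\}$. The crucial bridge is that Lemmas~\ref{lem:deterministic_indicator} and~\ref{lem:sum_prob} pin down $\xi_{\omega}$ completely on the support of every element of $\omega$'s basis: $\xi_{\omega}(\lambda)=1$ for $\lambda\in\supp{\omega}$, and $\xi_{\omega}(\lambda)=0$ for $\lambda$ in the support of any other element $\omega(j)$ with $j\neq k$. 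Applying the Born rule to the pure states $\rho$ and $\omega$ gives $\trs{\rho\omega}=\int_{\Lambda}d\lambda\,\mu_{\rho}(\lambda)\xi_{\omega}(\lambda)$, and since both $\rho$ and $\omega$ lie in non-negative bases, both factors are non-negative, so the integrand never changes sign.

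For the forward direction (orthogonal $\Rightarrow$ disjoint) I would assume $\trs{\rho\omega}=0$. The non-negative integrand then forces $\mu_{\rho}(\lambda)\xi_{\omega}(\lambda)=0$ almost everywhere. Restricting to $\supp{\omega}$, where $\xi_{\omega}=1$, this yields $\mu_{\rho}(\lambda)=0$ almost everywhere on $\supp{\omega}$; since $\supp{\rho}$ is exactly the set where $\mu_{\rho}\neq 0$, the two supports can overlap only on a set of measure zero, which is the sense of disjointness used throughout.

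For the converse (disjoint $\Rightarrow$ orthogonal) I would again start from the Born rule, noting that the integral only receives contributions from $\supp{\rho}$. The key observation is that $\supp{\rho}$ avoids the region $\Lambda_0=\{\lambda:q(\lambda)=0\}$: by Lemma~\ref{lem:sum_prob} applied to $\rho$'s basis, $\mu_{\rho}(\lambda)=q(\lambda)$ on $\supp{\rho}$, and $\mu_{\rho}\neq 0$ there by definition, so $q>0$ on $\supp{\rho}$. Because $q$ is universal (independent of the chosen basis), the partition in Lemma~\ref{lem:sum_prob} for $\omega$'s basis forces every $\lambda\in\supp{\rho}$ to lie in $\supp{\omega(j)}$ for some $j$. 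Disjointness of $\supp{\rho}$ and $\supp{\omega}=\supp{\omega(k)}$ rules out $j=k$, so on all of $\supp{\rho}$ we have $\xi_{\omega}(\lambda)=0$. The Born-rule integral therefore vanishes, giving $\trs{\rho\omega}=0$ and hence orthogonality of the pure states.

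I expect the main obstacle to be the converse direction, specifically the step showing that $\supp{\rho}$ lies entirely inside the union of the supports of $\omega$'s basis elements. This is where the universality of $q$ and the partition structure of Lemma~\ref{lem:sum_prob} do the essential work: without knowing that $\Lambda_0$ is basis-independent, one could not conclude that compatibility with $\rho$ forces compatibility with exactly one element of $\omega$'s basis. The forward direction, by contrast, is a routine consequence of the non-negativity of the integrand.
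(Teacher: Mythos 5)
Your proof is correct and follows essentially the same route as the paper's: both express $\tr{\rho\omega}$ via the Born rule as an integral of a non-negative integrand and use Lemmas~\ref{lem:deterministic_indicator} and~\ref{lem:sum_prob} (the deterministic indicator functions together with the universal distribution $q$ and its partition of $\Lambda$) to show that the integrand is supported, and strictly positive, exactly on $\supp{\rho}\cap\supp{\omega}$. The only cosmetic differences are that you swap which state plays the role of preparation versus effect and spell out the two directions separately, whereas the paper compresses the equivalence into a single argument.
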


\begin{proof}
Let $\rho$ and $\omega$ be elements of (possibly identical) non-negative 
bases. By Lemma~\ref{lem:deterministic_indicator} and \ref{lem:sum_prob}, $\xi
_{\rho}(\lambda)$ is nonzero only for $\lambda$ such that $\mu_{\rho}(\lambda)>0$ 
or $q(\lambda)=0$. By Lemma~\ref{lem:sum_prob}, $\mu_{\omega}(\lambda)$ 
can only be nonzero for $\lambda$ such that $q(\lambda)>0$. Therefore, for 
\begin{align}
\int_{\Lambda}d\lambda\, \mu_{\omega}(\lambda)\xi_{\rho}(\lambda) &= \tr{\rho 
\omega}
\end{align}
to hold, there must exist some $\lambda$ such that $\mu_{\rho}(\lambda)\neq 0$
and $\mu_{\omega}(\lambda)\neq 0$ if and only if $\rho$ and $\omega$ are not 
orthogonal.
\end{proof}

Lemmas~\ref{lem:deterministic_indicator}--\ref{lem:common_support} provide 
the basic mathematical tools that we use to provide an upper bound on the 
number of non-negative bases of $\mc{H}_d$ in any quasiprobability 
representation of quantum mechanics. Any quasiprobability representation over 
a space $\Lambda$ which is non-negative when restricted to a subset of bases 
of $\mc{H}_d$ must satisfy the following: the representation of a 
non-negative basis $\{\rho(j)\}$ of $\mc{H}_d$ must correspond to a 
partitioning of $\Lambda$ into $d$ disjoint regions $\{\supps{\rho(j)}\}$ 
such that $\rho(j)$ assigns nonzero probability for all 
$\lambda\in\Lambda_j$ and 0 probability elsewhere. The measurement in 
the non-negative basis then corresponds to determining which region
the ontic state is in.

\section{Quasi-probability representations of qubits}\label{sec:qubits}

We now consider the simplest case, namely, quasiprobability representations 
of qubits (i.e., two-dimensional quantum systems). We will first establish 
upper bounds on the number of non-negative bases for a single qubit and 
necessary relations between sets of non-negative bases. We will then construct 
a complete characterization of all the sets of bases that are closed under a 
nontrivial unitary group and are simultaneously non-negative in some 
quasiprobability representation.

The qubit case allows a geometrical approach, as any qubit state $\rho\in\mc{B
}(\mc{H}_2)$ can be written as
\begin{align}
\rho = \frac{1}{2}\left(\unit + \vec{r}(\rho)\cdot\vec{\sigma}\right)	\,,
\end{align}
where $\vec{r}(\rho)\in\mathbb{R}^3$ is the Bloch vector corresponding to 
$\rho$ and $\vec{\sigma}$ is the vector $\left(X,Y,Z\right)$ of Pauli 
matrices. Orthogonal quantum states then correspond to antipodal Bloch 
vectors, i.e.,
\begin{align}
\tr{\rho \omega} = 0 \Leftrightarrow \vec{r}(\rho)=-\vec{r}(\omega)	\,.
\end{align}
Another useful feature of the qubit case is that any pure single qubit state 
can be uniquely extended to an orthonormal basis of a qubit. Therefore we can 
represent a basis by a single Bloch vector $\vec{r}(j)$ and denote the basis 
elements by
\begin{align}
\{\rho(j,\gamma) := \frac{1}{2}\left[\unit + \gamma \vec{r}(j)\cdot \vec{\sigma}\right]:\gamma=\pm\}	\,.
\end{align}

\subsection{Maximum number of non-negative qubit bases}\label{sec:qubit_bound}

We begin by showing that a quasiprobability representation cannot be 
non-negative for more than two bases in any plane of the Bloch sphere. That is, 
for any quasiprobability distribution, the Bloch vectors corresponding to any 
three distinct non-negative bases (if three such bases exist) must be 
linearly independent. We will then prove that if four bases are non-negative 
in some quasiprobability representation of a qubit, then they correspond to 
the vertices of a right cuboid. 

\begin{thm}\label{thm:qubit_planar}
In any quasiprobability representation of a qubit, there are at most two 
non-negative bases in any plane of the Bloch sphere.
\end{thm}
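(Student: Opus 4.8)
The plan is to work in the Bloch-sphere picture and use the disjointness of supports of orthogonal states (Lemma~\ref{lem:common_support}) together with the partition structure of Lemma~\ref{lem:sum_prob}. Suppose for contradiction that there exist three distinct non-negative bases whose Bloch vectors $\vec{r}(1),\vec{r}(2),\vec{r}(3)$ all lie in a common plane through the origin. Each basis $\{\rho(j,\pm)\}$ partitions $\Lambda_{*}$ into two disjoint support regions (plus the zero-probability set $\Lambda_0$), since by Lemma~\ref{lem:common_support} the two antipodal states $\rho(j,+)$ and $\rho(j,-)$ are orthogonal and hence have disjoint supports, and by Lemma~\ref{lem:sum_prob} every $\lambda$ with $q(\lambda)>0$ is compatible with exactly one of them. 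So each non-negative basis in the plane induces a partition of the relevant ontic space into a $``+"$ cell and a $``-"$ cell.

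First I would set up coordinates in the plane, writing each unit Bloch vector as an angle $\theta_j$, and compute the pairwise inner products $\vec{r}(j)\cdot\vec{r}(k)=\cos(\theta_j-\theta_k)$, which give the transition probabilities $\tr{\rho(j,\gamma)\rho(k,\delta)}=\tfrac12(1+\gamma\delta\,\vec{r}(j)\cdot\vec{r}(k))$. The key observation is that for three coplanar non-antipodal directions, the three great-circle diameters divide the plane into regions, and the induced $\pm$ partitions are mutually ``nested'' in a way that is overconstrained. I would then translate the Born-rule integral of Eq.~\eqref{eq:ontic_probability} into a statement about overlaps of the support cells: for states $\rho(j,\gamma)$ and $\rho(k,\delta)$, the measure of the overlap $\supps{\rho(j,\gamma)}\cap\supps{\rho(k,\delta)}$ (weighted by $q$) must equal the transition probability $\tr{\rho(j,\gamma)\rho(k,\delta)}$. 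This reduces the problem to a combinatorial/measure-theoretic consistency condition among the three binary partitions.

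The main obstacle, and the crux of the argument, is showing that no assignment of ontic measure can simultaneously realize all three transition-probability constraints once the three directions are coplanar and distinct. My strategy would be to pick the three distinct directions and observe that one of them, say $\vec{r}(3)$, lies strictly ``between'' the other two in the plane; then the partition cell $\supps{\rho(3,+)}$ is forced by the two constraints coming from bases $1$ and $2$ to contain parts of incompatible regions, producing a contradiction with the required value of $\tr{\rho(3,\gamma)\rho(1,\delta)}$ or $\tr{\rho(3,\gamma)\rho(2,\delta)}$. Concretely, I expect to show that the three transition probabilities $\cos(\theta_1-\theta_2)$, $\cos(\theta_2-\theta_3)$, $\cos(\theta_1-\theta_3)$ cannot all be matched by overlaps of two-cell partitions unless two of the angles coincide, contradicting distinctness.

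An alternative, possibly cleaner route would be to exploit that in the plane the Bloch vectors are linearly dependent, so one can write $\vec{r}(3)$ as a real linear combination $a\,\vec{r}(1)+b\,\vec{r}(2)$; pushing this through the linear map $\rho\mapsto\mu_\rho$ and using convex-linearity together with the nonnegativity and the deterministic indicator structure of Lemma~\ref{lem:deterministic_indicator} should directly force a sign contradiction in $\mu_{\rho(3,+)}(\lambda)$ on some cell where the decomposition demands a negative contribution. I would expect the linear-dependence argument to be the shortest, with the geometric partition picture serving as the intuition; the delicate step is controlling behavior on $\Lambda_0$ and on measure-zero sets, which the footnote convention in Lemma~\ref{lem:deterministic_indicator} handles.
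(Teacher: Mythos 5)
Your second (``alternative'') route is the one the paper actually takes, and it can be made to work --- but in both routes you stop exactly at the decisive step, so as written the proposal has a gap. In the first route, the claim that the three pairwise overlap constraints ``cannot all be matched by overlaps of two-cell partitions unless two of the angles coincide'' is the entire content of the theorem and is simply asserted; establishing it directly would mean proving infeasibility of a linear system over the eight atoms of the three binary partitions for \emph{every} coplanar configuration, which you do not do. In the second route, ``should directly force a sign contradiction'' is likewise left unexecuted, and the specific decomposition you propose, $\vec{r}(3)=a\,\vec{r}(1)+b\,\vec{r}(2)$ with real $a,b$, is an affine rather than convex combination of density operators, so you would additionally have to carry a $\tfrac{1-a-b}{2}\unit$ term through $\mu$ and do a small case analysis on the allowed values of $a$ and $b$.

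The missing construction is the following. Coplanarity guarantees that the diameter through $\pm\vec{r}(2)$ meets the chord joining $\vec{r}(1)$ to $\vec{r}(3)$ at some point $\vec{r}(\rho)$, giving a single mixed state with two genuinely \emph{convex} decompositions, $\rho=a\rho(2,+)+(1-a)\rho(2,-)=b\rho(1,+)+(1-b)\rho(3,+)$, with $a,b\in(0,1)$ because the bases are distinct. Since $\rho(1,+)$ and $\rho(3,+)$ are non-orthogonal, Lemma~\ref{lem:common_support} supplies a point $\lambda'\in\supps{\rho(1,+)}\cap\supps{\rho(3,+)}$; by Lemma~\ref{lem:sum_prob} the second decomposition then gives $\mu_{\rho}(\lambda')=b\,q(\lambda')+(1-b)\,q(\lambda')=q(\lambda')$, while the first gives at most $\max(a,1-a)\,q(\lambda')<q(\lambda')$ because $\lambda'$ lies in the support of at most one of $\rho(2,\pm)$. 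That single evaluation is the contradiction; without it (or an equivalent explicit infeasibility argument in your first route) the proof is incomplete.
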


\begin{proof}

The proof is by contradiction. Let $\{\rho(j,\gamma):j=1,2,3,\gamma=\pm\}$ be 
three distinct coplanar bases that are non-negative in some quasi-probability 
representation. We rotate the Bloch sphere so that the bases are in the $xz$ 
plane of the Bloch sphere and label the bases as shown in 
Fig.~\ref{fig:co-planar_bases}.

As the three vectors $\{\vec{r}(j):j=1,2,3\}$ are linearly dependent, there 
exists a vector $\vec{r}(\rho)$ and $a,b\in[0,1]$ such that
\begin{align}\label{eq:coplanar_bases_vector}
\vec{r}(\rho) &= a\vec{r}(2) + (1-a)[-\vec{r}(2)]	\nonumber\\
&= b \vec{r}(1) + (1-b)\vec{r}(3) \,.
\end{align}
Therefore there exists a mixed state $\rho$ such that
\begin{align}\label{eq:coplanar_bases_decomposition}
\rho &= a\rho(2,+) + (1-a)\rho(2,-)	\nonumber\\
&= b \rho(1,+) + (1-b) \rho(3,+)	\,.
\end{align}

As $\mu$ is convex-linear, we have
\begin{align}\label{eq:qubit_mixed}
\mu_{\rho}(\lambda) &= a\mu_{\rho(2,+)}(\lambda) + (1-a)\mu_{\rho(2,-)}(
\lambda)	\nonumber\\
&= b\mu_{\rho(1,+)}(\lambda) + (1-b)\mu_{\rho(3,+)}(\lambda)	\,.
\end{align}
for all $\lambda\in\Lambda$. 

As the three bases are distinct, neither $a$ or $b$ can be $0$ or $1$. As 
$\rho(1,+)$ and $\rho(3,+)$ are not orthogonal, there exists 
$\lambda'\in\supps{\rho(1,+)}\cap\supps{\rho(3,+)}$ by 
Lemma~\ref{lem:common_support}. Then by Lemma~\ref{lem:sum_prob}, the first 
line of Eq.~\eqref{eq:qubit_mixed} gives $\mu_{\rho}(\lambda') < q(\lambda')$, 
while the second gives $\mu_{\rho}(\lambda') = q(\lambda')$. 
\end{proof}

\begin{figure}
	\centering
	\includegraphics[width=0.75\linewidth]{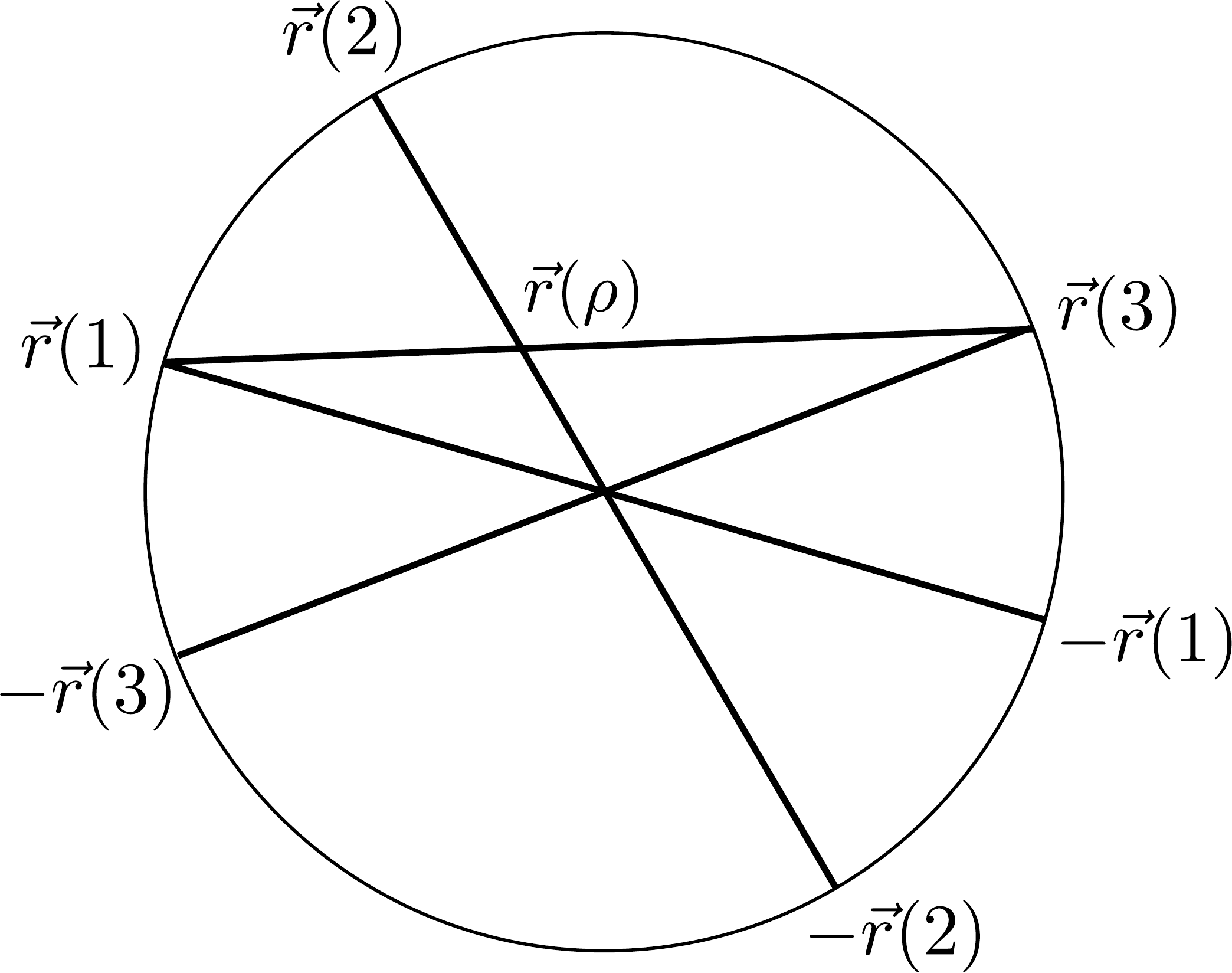}
	\caption{Diagrammatic representation of the decomposition of the vector $
\vec{r}(\rho)$ in Eq.~\eqref{eq:coplanar_bases_vector} in terms of $\{\pm\vec{
r}(2)\}$ and $\{\vec{r}(1),\vec{r}(3)\}$.}\label{fig:co-planar_bases}
\end{figure}

The proof that no three non-negative bases can be coplanar in the Bloch sphere 
rests on the existence of the vector in Eq.~\eqref{eq:coplanar_bases_vector}. 
This vector must always exist as otherwise there would be three linearly 
independent vectors in a two-dimensional plane, which would be a contradiction.

Initially, one might expect that this proof could be generalized to rule out 
four non-negative bases in the full Bloch sphere (i.e., the existence of four 
non-negative bases would correspond to four linearly independent Bloch 
vectors, contradicting the dimensionality of a sphere). This intuition almost 
always holds. However, as we now prove, there is a family of exceptions with 
a high degree of symmetry, namely, if the four bases correspond to the 
vertices of a right cuboid (i.e., a solid with rectangular faces). Therefore 
any set of four non-negative bases of a qubit must correspond to the vertices 
of a right cuboid, which also implies that there exists no quasiprobability 
representation of a qubit with five or more non-negative bases.

\begin{thm}\label{thm:qubit_new}
If there are four non-negative bases in a quasiprobability representation of 
a qubit, then the Bloch vectors corresponding to these four bases must 
correspond to the vertices of a right cuboid.
\end{thm}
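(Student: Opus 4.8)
The plan is to combine the support/probability machinery of Lemmas~\ref{lem:deterministic_indicator}--\ref{lem:common_support} with the fact that four vectors in $\mathbb{R}^3$ are always linearly dependent. Let $\{\vec{r}(j):j=1,2,3,4\}$ be the Bloch vectors of four non-negative bases. By Theorem~\ref{thm:qubit_planar} any three of them are linearly independent, so the (necessarily existing) dependence relation $\sum_j c_j\vec{r}(j)=0$ must have every $c_j\neq 0$: if some $c_j=0$, the remaining three vectors would be linearly dependent and hence coplanar, contradicting Theorem~\ref{thm:qubit_planar}. Since each basis is an unordered antipodal pair, I am free to relabel $\rho(j,+)\leftrightarrow\rho(j,-)$, i.e.\ replace $\vec{r}(j)\to-\vec{r}(j)$, so I would orient every vector so that $c_j>0$, and set $C=\sum_j c_j$.

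Next I would extract an algebraic constraint on the $c_j$ from non-negativity. With all $c_j>0$ and $\sum_j c_j\vec{r}(j)=0$, the states $\rho(j,+)$ decompose the maximally mixed state as $\tfrac{1}{C}\sum_j c_j\,\rho(j,+)=\tfrac{1}{2}\unit$. Applying convex-linearity of $\mu$ and using $\mu_{\frac12\unit}(\lambda)=\tfrac12 q(\lambda)$ from Lemma~\ref{lem:sum_prob} gives $\tfrac{1}{C}\sum_j c_j\,\mu_{\rho(j,+)}(\lambda)=\tfrac12 q(\lambda)$. At any $\lambda$ with $q(\lambda)>0$, Lemma~\ref{lem:sum_prob} forces $\mu_{\rho(j,+)}(\lambda)=s_j(\lambda)q(\lambda)$ with $s_j(\lambda)\in\{0,1\}$ recording whether $\lambda\in\supps{\rho(j,+)}$, so that $\sum_j c_j s_j(\lambda)=C/2$. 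For any pair $i\neq j$ the states $\rho(i,+)$ and $\rho(j,+)$ are non-orthogonal (distinct bases have non-antipodal vectors), so by Lemma~\ref{lem:common_support} their supports intersect; at a common point $s_i=s_j=1$, whence $c_i+c_j\le\sum_k c_k s_k(\lambda)=C/2$. Applying this to a pair and its complement, whose $c$-sums add to $C$, forces $c_i+c_j=C/2$ for every pair, and since $c_i+c_j=c_i+c_k$ gives $c_j=c_k$ this immediately yields $c_1=c_2=c_3=c_4$. Thus $\sum_j\vec{r}(j)=0$.

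It then remains to show that four pairwise non-antipodal unit vectors with $\sum_j\vec{r}(j)=0$ are the four body diagonals of a right cuboid, and I expect this geometric translation to be the crux. Writing $\vec{p}=\vec{r}(1)+\vec{r}(2)=-[\vec{r}(3)+\vec{r}(4)]$, the identity $|\vec{r}(1)|=|\vec{r}(2)|$ gives $\vec{p}\cdot[\vec{r}(1)-\vec{r}(2)]=0$, and likewise $\vec{p}\cdot[\vec{r}(3)-\vec{r}(4)]=0$, so $\vec{q}=\vec{r}(1)-\vec{r}(2)$ and $\vec{q}\,'=\vec{r}(3)-\vec{r}(4)$ both lie in the plane orthogonal to $\vec{p}$. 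Moreover $|\vec{r}(1)+\vec{r}(2)|=|\vec{r}(3)+\vec{r}(4)|$ forces $\vec{r}(1)\cdot\vec{r}(2)=\vec{r}(3)\cdot\vec{r}(4)$ and hence $|\vec{q}|=|\vec{q}\,'|$. The eight points $\pm\vec{r}(j)$ then split into two congruent, centrally related parallelograms with equal diagonals $\vec{q},\vec{q}\,'$ (hence rectangles) lying in the two parallel planes orthogonal to $\vec{p}$ through $\pm\tfrac12\vec{p}$, joined by edges parallel to $\vec{p}$; that is precisely a right cuboid. Non-degeneracy of the cuboid follows because a flat cuboid would make the four bases coplanar, contradicting Theorem~\ref{thm:qubit_planar}. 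The routine but delicate point is verifying that the lateral edges connect $\vec{r}(1)$ with $-\vec{r}(2)$ (and similarly on the other face), which is what certifies a right prism rather than an oblique parallelepiped.
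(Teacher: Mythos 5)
Your proof is correct, and it rests on the same toolkit as the paper's (convex-linearity, Lemmas~\ref{lem:sum_prob} and \ref{lem:common_support}, Theorem~\ref{thm:qubit_planar}, and the forced linear dependence of four vectors in $\mathbb{R}^3$), but the decomposition you use is genuinely different. The paper singles out the fourth basis and writes a generic mixed state two ways, $a\rho(4,+)+(1-a)\rho(4,-)=\sum_{j=1}^{3}s_j\rho(j,+)$; the left-hand side forces $\mu_{\rho}(\lambda)\in\{aq(\lambda),(1-a)q(\lambda)\}$, common support points for pairs among the first three bases then give $s_j\in\{a,1-a\}$, and $\sum_j s_j=1$ pins $s_j=\tfrac13$, $a\in\{\tfrac13,\tfrac23\}$, i.e.\ $\vec{r}(4)=\vec{r}(1)+\vec{r}(2)+\vec{r}(3)$. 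You instead treat all four bases symmetrically, decomposing $\tfrac12\unit$ as $\tfrac1C\sum_j c_j\rho(j,+)$ and deriving $c_i+c_j\le C/2$ for every pair; since complementary pairs sum to $C$, every inequality is an equality and $c_1=\cdots=c_4$, which is the same relation up to the sign flips both proofs allow. Your version avoids the case analysis on $a$ and is arguably cleaner for the qubit, while the paper's asymmetric decomposition (one distinguished basis element expanded over the remaining bases) is the form that carries over directly to the qudit results in Theorems~\ref{thm:qudits} and \ref{thm:qudits_different}. For the final geometric step the paper rotates to explicit coordinates and reads off a rectangle in the plane $z=\cos\theta$, whereas your coordinate-free argument via $\vec{p}=\vec{r}(1)+\vec{r}(2)$, the equal diagonals $|\vec{q}|=|\vec{q}\,'|$, and lateral edges parallel to $\vec{p}$ is equally valid; the ``delicate point'' you flag does check out, since $\vec{r}(1)$ and $-\vec{r}(2)$ share the horizontal component $\tfrac12\vec{q}$, so the connecting edge is indeed parallel to $\vec{p}$ and the prism is right, and non-degeneracy follows from Theorem~\ref{thm:qubit_planar} exactly as you say.
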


\begin{proof}

The proof is by contradiction. Let $\{\rho(j,\gamma):j=1,2,3,\gamma=\pm,4\}$ 
be four bases that are non-negative in a quasiprobability representation of a 
qubit. Negating the vectors $\vec{r}_j$ as necessary (which corresponds to 
relabeling the basis states), there exists a vector $\vec{r}(\rho)$ and $a,s_j
\in[0,1]$ such that
\begin{align}\label{eq:combination}
\vec{r}(\rho) &= a\vec{r}(4) + (1-a)[-\vec{r}(4)] \nonumber\\
&=	\sum_{j=1}^3 s_j \vec{r}(j)	\,,
\end{align}
where $\sum_j s_j = 1$, as otherwise the four vectors $\{\vec{r}(j):j=1,2,3,4
\}$ would be linearly independent.

Therefore there exists a state $\rho$ such that
\begin{align}\label{eq:decomposition}
\rho &=a\rho(4,+) + (1-a)\rho(4,-) \nonumber\\
&= \sum_{j=1}^3 s_j\rho(j,+)	\,.
\end{align}
As $\mu$ is convex-linear,
\begin{align}\label{eq:qubit_mixed2}
\mu_{\rho}(\lambda) &= 	a\mu_{\rho(4,+)}(\lambda) + (1-a)\mu_{\rho(4,-)}(
\lambda) \nonumber\\
&= \sum_{j=1}^3 s_j\mu_{\rho(j,+)}(\lambda) \,,
\end{align}
for all $\lambda\in\Lambda$. 

We now use the properties of quasiprobability representations to restrict the 
values of the coefficients and then show that the only solutions to 
Eq.~\eqref{eq:combination} with the appropriate coefficients correspond to the vertices 
of a right cuboid.

As the four bases are distinct and cannot be coplanar by 
Theorem~\ref{thm:qubit_planar}, none of the coefficients 
can be $0$ or $1$. By Lemma~\ref{lem:sum_prob}, the first 
line of Eq.~\eqref{eq:qubit_mixed2} implies 
$\mu_{\rho}(\lambda) \in\{a q(\lambda), (1-a) q(\lambda)\}$ 
for all $\lambda\in\Lambda_{*}$. 

As $\rho(1,-)$ and $\rho(2,-)$ are not orthogonal, there exists $\lambda_3
\in\supps{\rho(1,-)}\cap\supps{\rho(2,-)}$ by Lemma~\ref{lem:common_support}. 
Lemma~\ref{lem:sum_prob} then implies $s_3 \in \{a,1-a\}$. Interchanging the 
roles of the first three bases and iterating the above argument implies $s_j
\in\{a,1-a\}$ for $j=1,2,3$. As $\sum_j s_j = 1$ and $a\notin\{0,1\}$, we 
have $s_1=s_2=s_3=\tfrac{1}{3}$ and $a\in\{\tfrac{1}{3},\tfrac{2}{3}\}$.

Substituting these coefficients back into Eq.~\eqref{eq:combination} and 
swapping the sign of $\vec{r}(4)$ if $a=\tfrac{2}{3}$ gives
\begin{align}\label{eq:combination_coefficients}
\vec{r}(4) = \vec{r}(1) + \vec{r}(2) + \vec{r}(3)	\,.
\end{align}
We can rotate the Bloch sphere so that
\begin{align}\label{eq:rotated_vectors}
\vec{r}_1 &= \left(-\sin\theta,0,\cos\theta\right)	\nonumber\\
\vec{r}_2 &= \left(\sin\theta,0,\cos\theta\right)
\end{align}
for some $\theta\in(0,\tfrac{\pi}{2})$. Substituting 
Eq.~\eqref{eq:rotated_vectors} into 
Eq.~\eqref{eq:combination_coefficients} gives 
$\vec{r}_x(3)= \vec{r}_x(4)$ and $\vec{r}_y(3)= \vec{r}_y(4)$. 
As $\vec{r}(3)$ and $\vec{r}(4)$ are distinct unit vectors 
(as they are pure states belonging to distinct bases by assumption),
$-\vec{r}_z(3) = \vec{r}_z(4) =\cos\theta$. Therefore the vectors
$\{\vec{r}(1),\vec{r}(2),-\vec{r}(3),\vec{r}(4)\}$ correspond to 
the vertices of a rectangle in the plane $z=\cos\theta$, so the vectors 
$\{\pm\vec{r}(j):j=1,2,3,4\}$ correspond to the vertices of a right cuboid.
\end{proof}

Theorems~\ref{thm:qubit_planar} and \ref{thm:qubit_new} show that in any 
quasiprobability representation of a qubit, any three non-negative bases must 
have linearly independent Bloch vectors and any four non-negative bases must 
correspond to the vertices of a right cuboid. We will find that these 
Theorems alone do not completely characterize the sets of bases that can 
simultaneously be non-negative in a quasiprobability representation. In 
particular, there are further restrictions on the sets of three bases that 
can simultaneously be non-negative. 

\subsection{Sets of non-negative bases that allow nontrivial transformations}
\label{sec:examples}

We now further restrict our consideration to non-negative subtheories of a 
qubit that contain at least one nontrivial unitary transformation. By 
nontrivial, we mean that the transformation is not a phase-multiple of the 
identity, which, for a set of two or more bases, implies that a nontrivial 
unitary gives a nontrivial permutation of the states in the subtheory. Such 
non-negative subtheories are analogous to the subtheory of single qubit 
stabilizers states, measurements in the corresponding bases, and single qubit 
Clifford transformations.

The set of Bloch vectors corresponding to a set of non-negative bases can be 
regarded as a set of pairs of antipodal points on the surface of a (Bloch) 
sphere. If there are transformations that permute the elements of these 
bases, then they correspond to elements of the point group of the set of 
points on the surface of the sphere, which will be a discrete subgroup of $O(3)$ 
(except in the special case wherein there is only one non-negative basis). 
For the transformations to have a unitary representation, they must have 
determinant +1 [i.e., be elements of $SO(3)$]. As we are interested in 
subtheories of quantum mechanics, we restrict to elements of
$SO(3)$~\footnote{We only consider elements of $SO(3)$ rather than $O(3)$ 
because we aim to describe quantum mechanics, or a subtheory thereof, for 
which only unitary and not anti-unitary transformations are allowed. However, 
it is possible to extend quantum mechanics to include anti-unitary 
transformations, and if one wished to consider ontological models describing
such an extended theory, then all transformations in $O(3)$ (including 
reflections) can be included. Because we are considering subtheories with 
only discrete evolution, the restriction to unitary dynamics is not necessarily 
well-motivated. Specifically, because the evolution in a discrete theory is 
not required to be continuously deformable to the identity, such an extended 
model may be reasonable.}.

The finite subgroups of $SO(3)$ have been completely classified. The only 
possible groups that can permute a set of non-negative bases 
are~\cite{Tinkham2003}: 
\begin{itemize}
\item D$_{\infty}$, the group of rotations about an axis and $\pi$-flips 
around some orthogonal axis;
\item C$_2 \simeq\mbb{Z}_2$, the cyclic group of two elements;
\item D$_2$, the symmetry group of a rectangle;
\item D$_3$, the symmetry group of an equilateral triangle;
\item D$_4$, the symmetry group of a square; and
\item O$_h$, the octahedral group,
\end{itemize}
where we have used the results of the previous section to ignore groups that 
are not the point group of a set of one, two or three linearly independent 
vectors and their negatives, or the vertices of a right cuboid.

We now completely characterize the sets of bases that are non-negative in 
some quasiprobability representation of a qubit and in addition are closed 
under a nontrivial unitary group. We are explicit with our construction for 
the cases of three and four non-negative bases; quasiprobability 
representations that are non-negative for one or two bases appear as subsets 
of those for three non-negative bases, and therefore we do not consider them 
separately.

\subsubsection*{One non-negative basis}

If there is a single non-negative basis (i.e., $\{\ket{0},\ket{1}\}$ up to 
unitary equivalence), then the group of transformations that permute the 
elements of the non-negative basis is the continuous (Lie) group D$_{\infty}$. 
We can therefore find a quasiprobability representation over a space 
consisting of two points---a classical bit---for which this basis is 
represented by non-negative probability distributions.

The transformations of the non-negative basis is generated by the group of $U(1)$
rotations about the $z$-axis (which leaves the basis states invariant) 
and $X$, which corresponds to a bit-flip.

\subsubsection*{Two non-negative bases}

If there are two non-negative bases, then the point group of the four 
vertices $\{\pm\vec{r}(1),\pm\vec{r}(2)\}$ is generically a representation of 
D$_2$. Expressing the Bloch vectors as
\begin{align}
\vec{r}(1) &= \left(\sin\theta,0,\cos\theta\right)	\,,\nonumber\\
\vec{r}(2) &= \left(-\sin\theta,0,\cos\theta\right)	\,,
\end{align}
the symmetry group is the group of Pauli matrices. The two non-negative bases 
and the generators of the symmetry group are illustrated in Fig.~\ref{fig:D2}. 
For the special case defined by $\vec{r}(1)\cdot\vec{r}(2)=0$, the four 
vertices correspond to the corners of a square and so the point group is D$_4$, 
which for the above vectors is generated by $X$ and $Y^{\frac{1}{2}}$.

\begin{figure}[t!]
	\includegraphics[width=\linewidth]{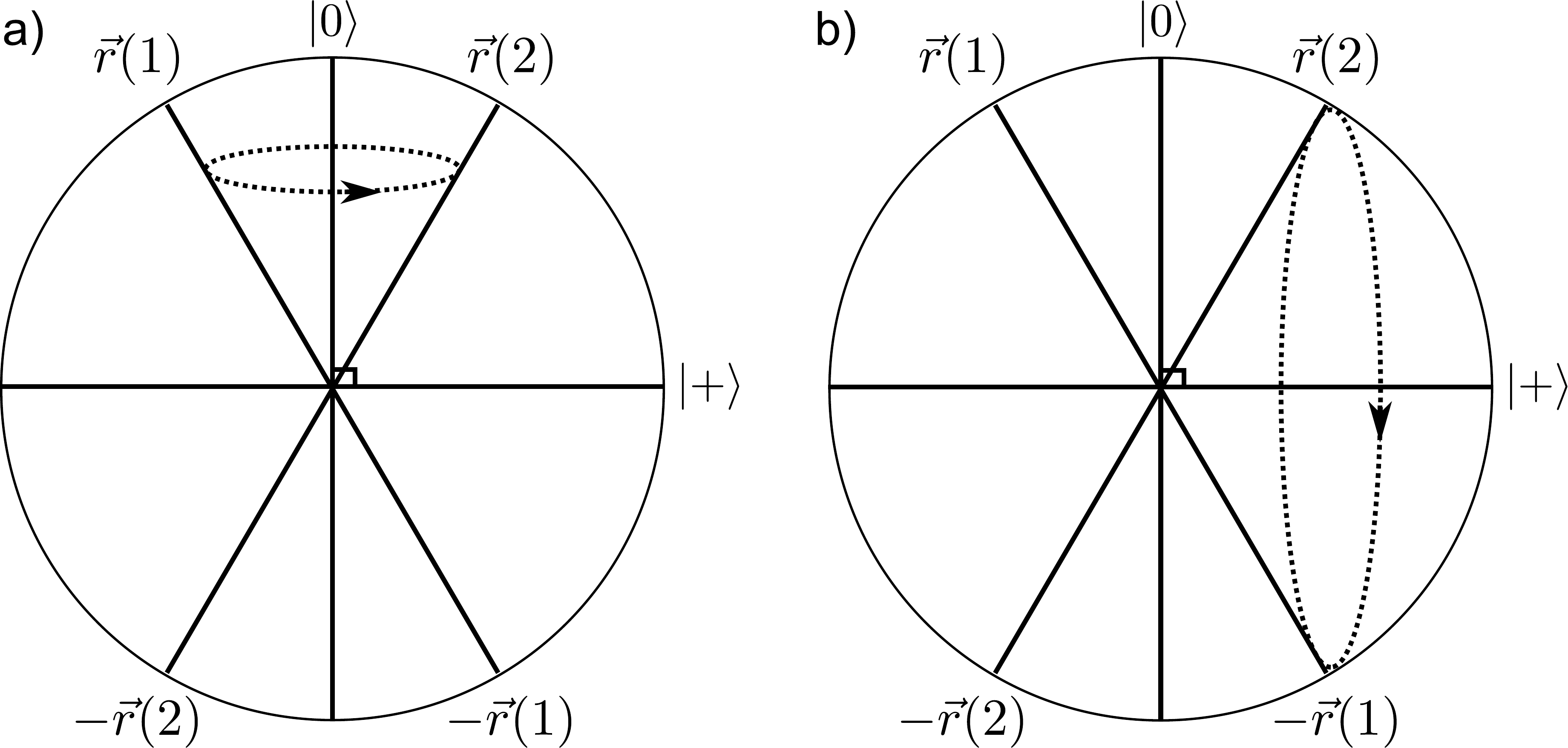}
	\caption{Bloch vector representation of the transformations $Z$ and $X$ 
that map: a) $\vec{r}(1)\leftrightarrow\vec{r}(2)$ and $-\vec{r}(1)
\leftrightarrow-\vec{r}(2)$ and b) $\vec{r}(1)\leftrightarrow-\vec{r}(2)$ and 
$-\vec{r}(1)\leftrightarrow\vec{r}(2)$, respectively.}\label{fig:D2}
\end{figure}

\subsubsection*{Three non-negative bases}

There are two inequivalent families of sets of three bases with nontrivial 
point groups, D$_3$ and $\mathbb{Z}_2$, respectively. Both families have 
special cases for which the three bases have a higher degree of symmetry. In 
particular, both families include a unique special case in common for which 
the point group is the octahedral group. For this special case, we recover 
the standard single qubit stabilizer states, whose point group is the single 
qubit Clifford group.

We wish to construct quasiprobability representations over some ontic state 
space $\Lambda$, in which the set of three bases $\{\rho(j,\gamma):j=1,2,3,
\gamma=\pm\}$ are non-negative. By Lemma~\ref{lem:sum_prob}, any non-negative 
basis corresponds to a bipartition of the ontic space $\Lambda$ such that 
each element of the basis only has support on one of the partitions. 
Furthermore, the probability that any non-negative state $\rho$ assigns to an 
ontic state $\lambda\in\supp{\rho}$ must be $q(\lambda)$, which is double the 
probability the maximally mixed state assigns to $\lambda$. Therefore, with 
three non-negative bases, $\Lambda$ can be partitioned into 8 regions 
corresponding to the 8 combinations of basis states that are non-negative in 
that region. Without loss of generality, we can integrate over these regions 
and, for convenience, we parametrize the ontic state space as the 8 points
\begin{align}
\Lambda = \{(\epsilon,a):\epsilon=\pm,a\in\mathbb{Z}_4\} \,,
\end{align}
and henceforth we write $q(\lambda) = q(\epsilon,a)$.

Our (arbitrary) choices of six distributions over these eight ontic states 
are listed in Tab.~\ref{tab:supports}.

\begin{table}[t!]
\begin{center}
\begin{tabular}{c|c}
\hline\hline
Quantum state & Support \\\hline
$\rho(1,+)$ & $(+,0),\ (+,1),\ ({-},2),\ ({-},3)$ \\
$\rho(1,-)$ & $({-},0),\ ({-},1),\ (+,2),\ (+,3)$ \\
$\rho(2,+)$ & $(+,0),\ ({-},1),\ (+,2),\ ({-},3)$ \\
$\rho(2,-)$ & $({-},0),\ (+,1),\ ({-},2),\ (+,3)$ \\
$\rho(3,+)$ & $(+,0),\ ({-},1),\ ({-},2),\ (+,3)$ \\
$\rho(3,-)$ & $({-},0),\ (+,1),\ (+,2),\ ({-},3)$ \\
\hline\hline
\end{tabular}
\end{center}
\caption{\label{tab:supports} List of the six possible supports over the set 
of ontic states $\{(\epsilon,a):\epsilon=\pm,a\in\mathbb{Z}_4\}$ for the 
elements of three non-negative bases $\{\rho(j,\gamma):j=1,2,3,\gamma=\pm\}$.}
\end{table}

The distribution $q(\epsilon,a)$ will have to satisfy certain constraints in 
order to reproduce the correct quantum mechanical predictions. The 
distribution must be normalized over the support of each non-negative state, 
i.e.,
\begin{align}\label{eq:normalizations}
\sum_{(\epsilon,a)\in\supps{\rho(j,\gamma)}} q(\epsilon,a) &= 1
\end{align}
for $\gamma=\pm$ and $j=1,2,3$. These six equations give only four 
independent constraints.

In addition, in order for the quasiprobability representation to reproduce 
the quantum probabilities of preparing a system in one non-negative basis and 
then measuring in another non-negative basis, we also require
\begin{align}\label{eq:3_bases_overlaps}
\sum_{(\epsilon,a)\in\supps{\rho(j_1,\gamma_1)}\cap\supps{\rho(j_2,\gamma_2)}}
q(\epsilon,a) & = \frac{1}{2}\left[1 + \gamma_1\gamma_2\vec{r}(j_1)\cdot\vec{r
}(j_2)\right]	\,,
\end{align}
for all $j_1 \neq j_2$, which are obtained by substituting 
Eq.~\eqref{eq:deterministic_indicator} into Eq.~\eqref{eq:ontic_probability}.
Only three of these equations (e.g., $j_1> j_2$ and $\gamma_1=\gamma_2 = +$)
give independent constraints. Therefore, for the cases we consider,
we obtain four independent constraints from Eq.~\eqref{eq:normalizations}
and three further independent constraints from 
Eq.~\eqref{eq:3_bases_overlaps}. With $q(\epsilon,a)$ defined on 
eight points, the seven independent constraints ensure there will
be at most a one-parameter family of quasiprobability representations 
for which the desired sets of states are all non-negative.

To find a quasiprobability representation of all states and measurements of a 
qubit that is non-negative for the above bases, we need to find operators 
$F(\epsilon,a)$ and $G(\epsilon,a)$ satisfying Eqs.~\eqref{eq:quasi_F}, 
\eqref{eq:quasi_G} and \eqref{eq:ontic_probability}. As these operators 
should give a non-negative distribution for our chosen bases, we require
\begin{align}
\mu_{\rho(j,\gamma)}(\epsilon,a) &= \trs{F(\epsilon,a)\rho(j,\gamma)} 
\nonumber\\
&= \begin{cases} q(\epsilon,a) & \text{if } (\epsilon,a)\in\supps{\rho(j,
\gamma)}\,,	\\ 0 & \text{otherwise}	\,,\end{cases}
\end{align}
for all $\epsilon,\gamma=\pm$, $j=1,2,3$ and $a=1,2,3,4$. To find such $F(
\epsilon,a)$ and $G(\epsilon,a)$, we find vectors $\vec{d}(a)$ such that
\begin{align}\label{eq:ontic_to_quasi}
\vec{d}(a)\cdot\vec{r}(j) &= \begin{cases} 1 & \text{if } (+,a)\in\supps{\rho(
j,+)}\,,	\\ -1 & \text{otherwise}	\,.\end{cases}
\end{align}
The operators
\begin{align}
F(\epsilon,a) &= \frac{q(\epsilon,a)}{2}\left(\unit + \epsilon\vec{d}(a)
\cdot\vec{\sigma}\right)	\nonumber\\
G(\epsilon,a) &= \frac{1}{2}\left(\unit + \epsilon\vec{d}(a)\cdot\vec{\sigma}
\right)
\end{align}
then define a quasiprobability representation that is non-negative for the 
three bases. Note that the operators $F(\epsilon,a)$ and $G(\epsilon,a)$ are 
identical up to normalization. In what follows, we will only explicitly 
present the operators $F(\epsilon,a)$.

We now turn separately to each of the inequivalent families and construct 
quasiprobability representations that are non-negative for each of the 
families. As both families include the single qubit stabilizer states as a 
special case, we will discuss this case after covering the general case for 
each family.

\medskip\paragraph*{\textbf{Case 1.}} Up to an overall unitary, the first 
family is the one-parameter family of three bases illustrated in 
Fig.~\ref{fig:D3} with
\begin{align}\label{eq:D3_bases}
\vec{r}(1) &= \left(\sin\theta,0,\cos\theta\right)	\,,\nonumber\\
\vec{r}(2) &= \left(-\tfrac{1}{2}\sin\theta,\tfrac{\sqrt{3}}{2}\sin\theta,
\cos\theta\right)	\,,\nonumber\\
\vec{r}(3) &= \left(-\tfrac{1}{2}\sin\theta,-\tfrac{\sqrt{3}}{2}\sin\theta,
\cos\theta\right)	\,,
\end{align}
for $\theta\in(0,\pi)$. The corresponding point group is D$_3$, which is 
generated by a $\frac{2\pi}{3}$ rotation about the $z$-axis (which we denote 
by $\Gamma$) and a $\pi$ rotation about the $y$-axis (which we denote by $\Pi$). 
These transformations do not commute as D$_3$ is a non-abelian group.

\begin{figure}[t!]
\centering
	\includegraphics[width=0.75\linewidth]{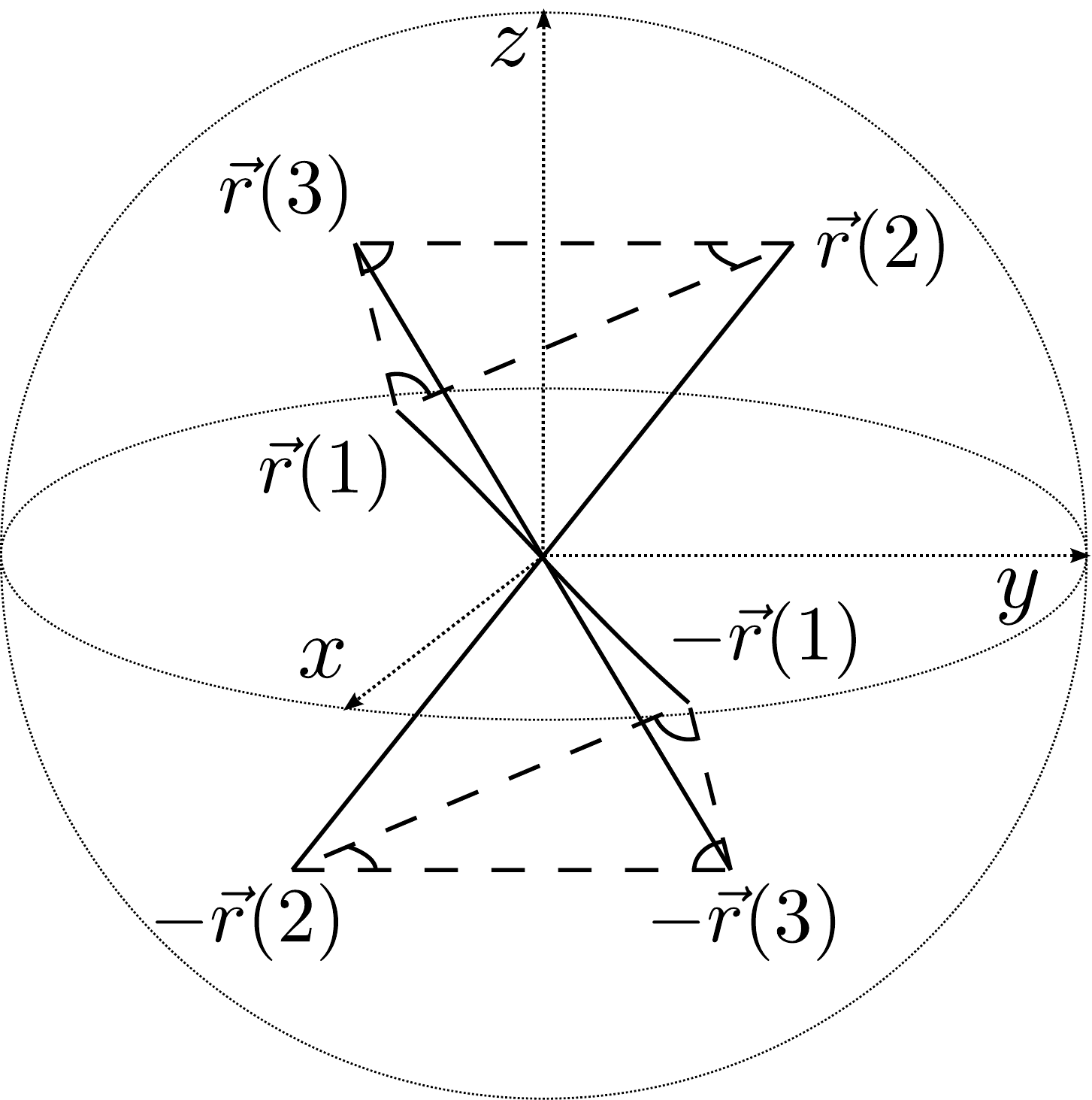}
	\caption{Illustration of the Bloch vectors in Eq.~\eqref{eq:D3_bases}, 
whose point group is D$_3$.}\label{fig:D3}
\end{figure}

To define a quasiprobability representation for which these three bases are 
non-negative, we need to find a distribution $q(\epsilon,a)$ satisfying 
Eq.~\eqref{eq:normalizations} and \eqref{eq:3_bases_overlaps}. The only 
such distributions are
\begin{align}
	q(+, 0) &= q_0	\,,\nonumber\\
	q(+, a) &= \frac{3}{2}\sin^2\theta - 1 + q_0\,,\ a=1,2,3	\,,\nonumber\\
	q(-, 0) &= 2 - q_0 - \frac{9}{4}\sin^2\theta	\,,\nonumber\\
	q(-, a) &= 1 - q_0 - \frac{3}{4}\sin^2\theta \,,\ a=1,2,3\,,
\end{align}
where $q_0\in[0,1]$ is a free parameter. The requirement that all 
probabilities should be in the interval $[0,1]$ implies
\begin{align}
0 \leq q_0 \leq 2-\frac{9}{4}\sin^2\theta	\,,
\end{align}
which can only be satisfied when $\sin^2\theta \leq \frac{8}{9}$. Therefore a 
quasiprobability representation for which these bases are non-negative can 
only be defined if $\sin^2\theta \leq \frac{8}{9}$. This constraint is not a 
consequence of either Theorem~\ref{thm:qubit_planar} or \ref{thm:qubit_new} 
and so provides an additional constraint on the set of non-negative bases.

In an ontological model, transformations are fundamentally transformations of 
ontic states, not epistemic states (i.e., states of knowledge). For an 
ontological model of a subtheory of quantum mechanics in which (pure) quantum 
states are epistemic states, this means that unitary transformations must 
\textit{supervene on} transformations of ontic states (i.e., must be a 
consequence of some underlying transformation of the ontic states). In the 
models we consider, we will always be able to assume that transformations of 
ontic states are deterministic (i.e., correspond to a permutation of the 
ontic state space). Conversely, some (but not all) permutations of ontic 
states in these models \textit{effect} a unitary transformation.

The permutations of ontic states that can effect the rotations $\Gamma$ and 
$\pi$ are illustrated in Fig.~\ref{fig:D3_transformations}~(g) and (h), 
respectively. The permutation in Fig.~\ref{fig:D3_transformations}~(h) only 
permutes the probability distributions for non-negative states if $q(+,0)=q(-,
0)$, which fixes
\begin{align}\label{eq:D3d_probabilities}
	q(\epsilon,a) &= q_0 = 1 - \frac{9}{8}\sin^2\theta	\,,\nonumber\\
	q(\epsilon,a) &= \frac{3}{8}\sin^2\theta := q_1	\,,\ a=1,2,3	\,,
\end{align}
for $\epsilon=\pm$, as illustrated in Fig.~\ref{fig:D3_transformations}~(a)-(f). 
Note that the transformation $\Gamma$ always supervenes on the permutation 
in Fig.~\ref{fig:D3_transformations}~(g), even if $q(+,0)\neq q(-,0)$.

\begin{figure}
\centering
	\includegraphics[width=0.95\linewidth]{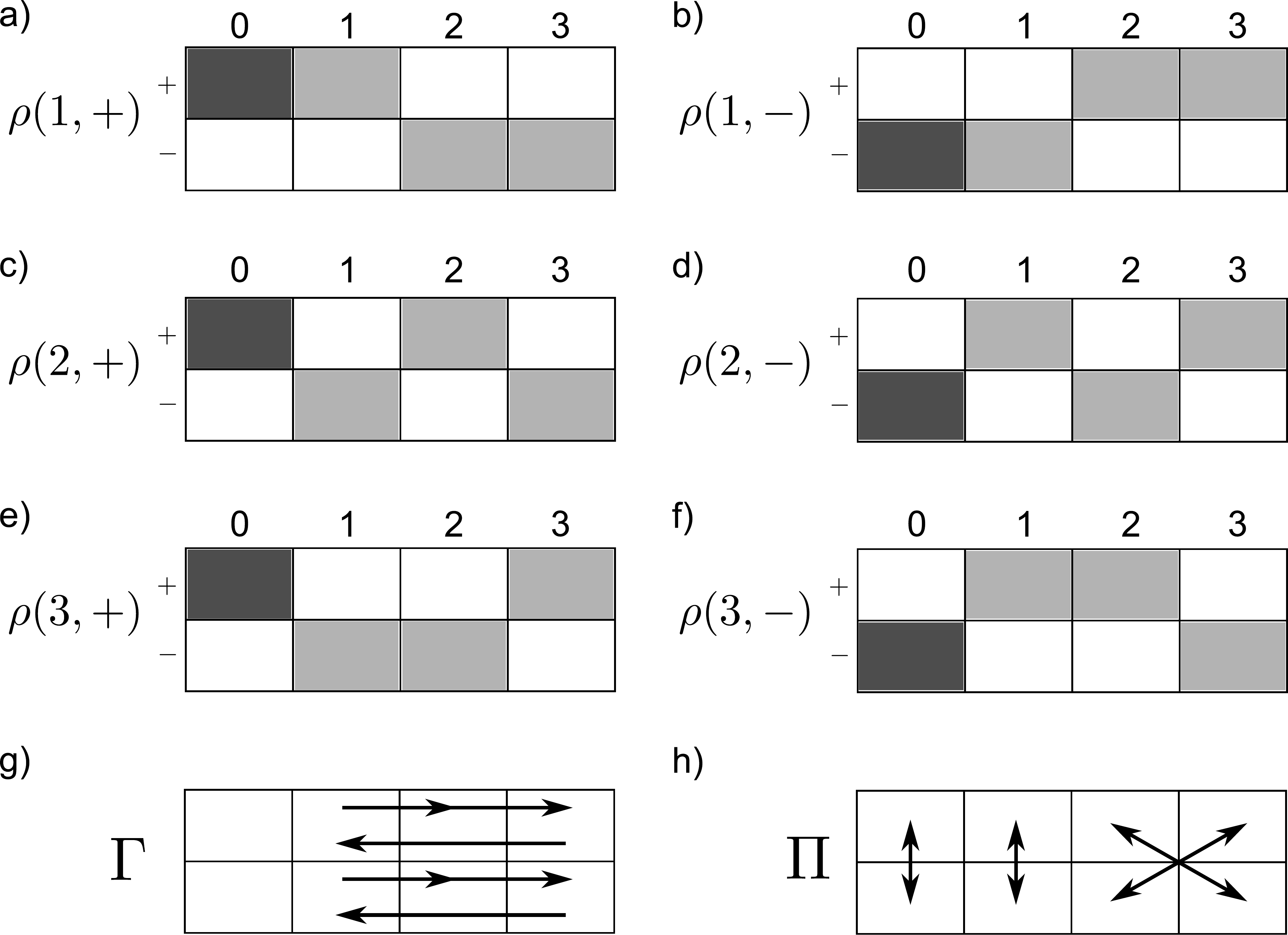}
	\caption{(a)-(f) Probability distributions $\mu_{\rho(j,\gamma)}(\epsilon,a)
$ over eight ontic states $\{(\epsilon,a):\epsilon=\pm,a\in\mbb{Z}_4\}$, for 
the three bases $\{\rho(j,\gamma):j=1,2,3,\gamma=\pm\}$ with Bloch vectors as 
in Eq.~\eqref{eq:D3_bases}, where the quantum states assign nonzero 
probability to the shaded ontic states. The shadings indicate the two 
nonzero probabilities $q_0$ (dark grey) and $q_1$ (light grey) that are assigned 
to the different ontic states. (g) and (h) Permutations of the ontic states 
that effect the rotations $\Gamma$ and $\Pi$, respectively.}
\label{fig:D3_transformations}
\end{figure}

To construct a quasiprobability distribution for which the three bases are 
non-negative and described by the above probability distributions, we need to 
find vectors $\vec{d}(a)$ satisfying Eq.~\eqref{eq:ontic_to_quasi}. By 
examining the permutation of ontic states that effects the transformation 
$\Gamma$, we note that $\vec{d}(a) = \Gamma_R^{a-1} \vec{d}(1)$ for $a=1,2,3$ 
[where $\Gamma_R$ is the fundamental (spin-$1$) representation of $\Gamma$], 
so we need only find $\vec{d}(0)$ and $\vec{d}(1)$, which can easily be 
determined to be
\begin{align}\label{eq:operator_vectors}
\vec{d}(0) &= \left(0,0,\sec\theta\right) \,,\nonumber\\
\vec{d}(1) &= \left(\tfrac{4}{3}\csc\theta,0,-\tfrac{1}{3}\sec\theta\right) \,.
\end{align}
The operators $F(\epsilon,a)$ are then
\begin{align}\label{eq:D3_distribution}
F(\epsilon,0) &= \frac{q_0}{2}\left(\unit + \gamma\sec\theta Z\right)	\,,
\nonumber\\
F(\epsilon,a) &= \frac{q_1}{2}\Gamma_U^{a-1}\left(\unit + \frac{4\gamma}{3}
\csc\theta X - \frac{\gamma}{3}\sec\theta Z\right)\Gamma_U^{1-a} \,,
\end{align}
for $\gamma=\pm$ and $a=1,2,3$, where
\begin{align}
\Gamma_U = \left(\begin{array}{cc} e^{\frac{2\pi i}{3}} & 0 \\ 0 & e^{-\frac{2
\pi i}{3}} \end{array}\right)
\end{align}
is the spin-$\frac{1}{2}$ representation of $\Gamma$. The operators in 
Eq.~\eqref{eq:D3_distribution} define a quasiprobability representation that 
reproduces all of quantum mechanics for the qubit and is non-negative for the 
states and measurements corresponding to the three bases with Bloch vectors 
in Eq.~\eqref{eq:D3_bases}.

\medskip\paragraph*{\textbf{Case 2.}} Up to an overall unitary, the second 
family of three bases have Bloch vectors
\begin{align}\label{eq:C2_bases}
\vec{r}(1) &= \left(\sin\theta,0,\cos\theta\right)	\,,\nonumber\\
\vec{r}(2) &= \left(-\sin\theta,0,\cos\theta\right)	\,,\nonumber\\
\vec{r}(3) &= \left(\cos\phi,\sin\phi,0\right)	\,,
\end{align}
for $\theta\in(0,\frac{\pi}{2})$ and $\phi\in(0,\pi)$, as illustrated in 
Fig.~\ref{fig:C2}. For $\phi\neq\frac{\pi}{2}$, the point group of these bases is 
$\mbb{Z}_2$ (i.e., the only nontrivial transformation is a $\pi$ rotation 
about the $z$-axis). If $\phi=\frac{\pi}{2}$, then the point group is the set 
of Pauli matrices, which is a projective representation of D$_2$ (or the 
octahedral group if $\theta=\frac{\pi}{4}$).

\begin{figure}[t!]
\centering
	\includegraphics[width=0.75\linewidth]{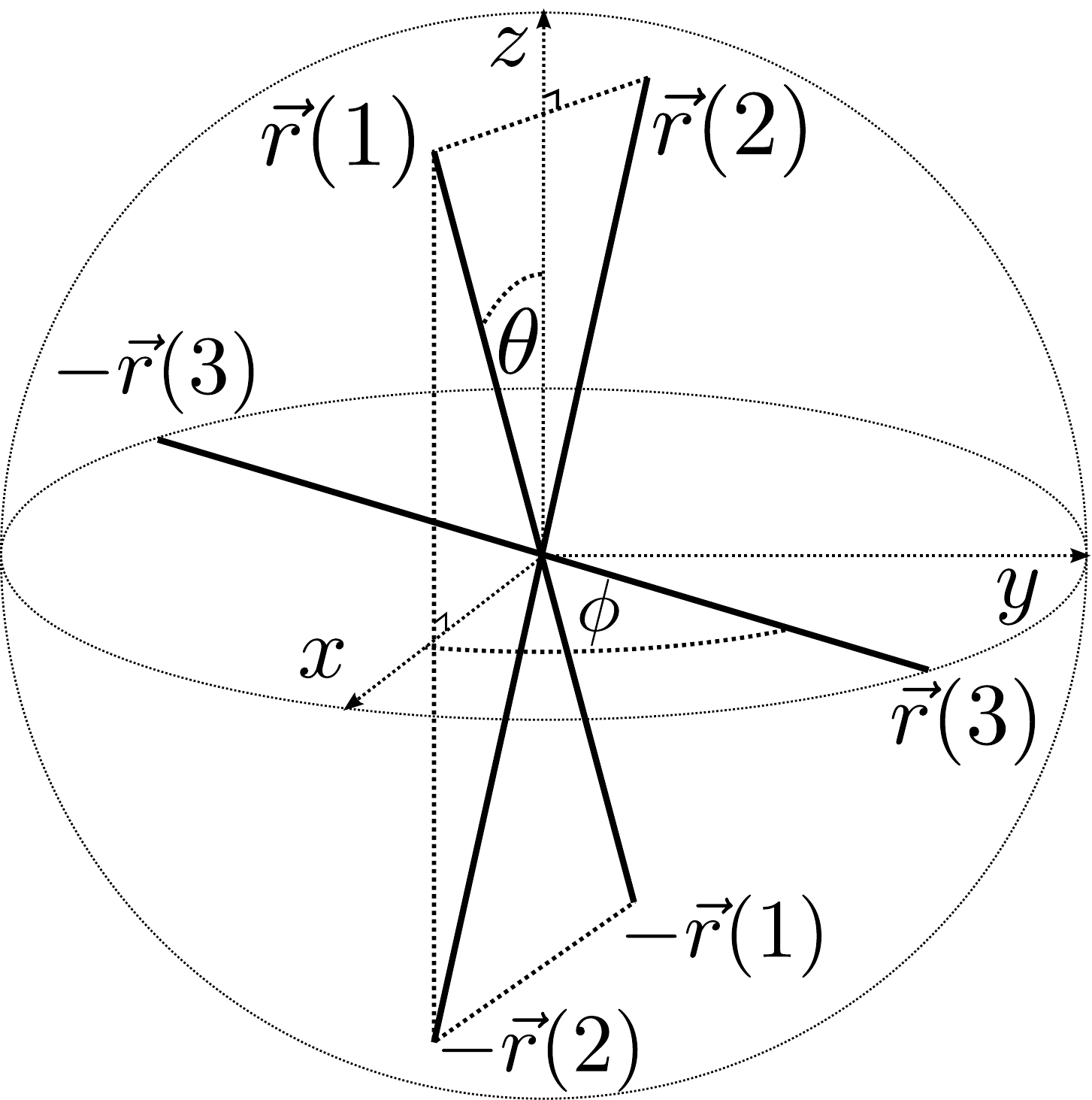}
	\caption{Illustration of the Bloch vectors in Eq.~\eqref{eq:C2_bases}, 
whose point group is $\mbb{Z}_2$.}\label{fig:C2}
\end{figure}

The only distributions $q(\epsilon,a)$ satisfying 
Eq.~\eqref{eq:normalizations} and \eqref{eq:3_bases_overlaps} are
\begin{align}
	q(+, 0) &= q(+, 3) := q_0	\,,\nonumber\\
	q(+, 1) &= q_0 - \frac{1}{2}\cos2\theta - \frac{1}{2}\cos\phi\sin\theta	\,,
\nonumber\\
	q(+, 2) &= q_0 - \frac{1}{2}\cos2\theta + \frac{1}{2}\cos\phi\sin\theta	\,,
\nonumber\\
	q(-, 0) &= q(-, 3) = \frac{1}{2} - q_0 + \frac{1}{2}\cos2\theta	\,,
\nonumber\\
	q(-, 1) &= \frac{1}{2} - q_0 - \frac{1}{2}\cos\phi\sin\theta	\,,\nonumber\\
	q(-, 2) &= \frac{1}{2} - q_0 + \frac{1}{2}\cos\phi\sin\theta	\,,
\end{align}
where $q_0\in[0,1]$ is a free parameter. Requiring all probabilities to be in 
the interval $[0,1]$, we obtain
\begin{align}
0 \leq q(+,1) + q(-,1) &= \sin\theta(\sin\theta-\cos\phi) \leq 2	\,, 
\nonumber\\
0 \leq q(+,2) + q(-,2) &= \sin\theta(\sin\theta+\cos\phi) \leq 2	\,.
\end{align}
Therefore we require $|\cos\phi| \leq \sin\theta$ in order for the bases in 
Eq.~\eqref{eq:C2_bases} to be non-negative in a quasiprobability 
representation. This is another example of a more restrictive condition on
non-negative bases than either Theorem~\ref{thm:qubit_planar} or 
\ref{thm:qubit_new}. This condition is also sufficient, as for such $\theta,\phi$, we can 
set the distribution to
\begin{align}\label{eq:C2_distribution}
q_0 := q(\gamma,0) &= q(\gamma,3) = \frac{\cos^2\theta}{2}	\,,\nonumber\\
q_1 := q(\gamma,1) &= \frac{\sin\theta}{2}(\sin\theta - \cos\phi)	\,,
\nonumber\\
q_2 := q(\gamma,2) &= \frac{\sin\theta}{2}(\sin\theta + \cos\phi)	\,,
\end{align}
for $\gamma=\pm$. 

As with the models for the bases in Eq.~\eqref{eq:D3_bases}, we can view the 
unitary transformations of quantum states as supervening on permutations of 
ontic states. The distribution in Eq.~\eqref{eq:C2_distribution} captures the 
symmetry of the bases and the possible transformations, as unitary 
transformations can only supervene on permutations of the ontic states for 
values of $\theta,\phi$ for which the bases have the appropriate symmetry.

For $\phi\neq\frac{\pi}{2}$, the only nontrivial unitary transformation that 
permutes non-negative states is $Z$ (i.e., a $\pi$-flip about the $z$-axis). 
The permutation of ontic states that effects $Z$ is depicted in 
Fig.~\ref{fig:C2_transformations} (g). 

For $\phi=\frac{\pi}{2}$, the symmetry group is the group of Pauli matrices, 
generated by $X$ and $Z$, which is a spin-$\frac{1}{2}$ representation of D$_2
$. The permutation of ontic states that effects $X$ is depicted in 
Fig.~\ref{fig:C2_transformations} (h). This permutation maps $(+,1)$ to 
$(-,2)$, and so we require $q(+,1) = q(-,2)$, which is satisfied if and 
only if $\phi=\frac{\pi}{2}$ in the distribution in 
Eq.~\eqref{eq:C2_distribution}. 

\begin{figure}[t!]
\centering
	\includegraphics[width=0.95\linewidth]{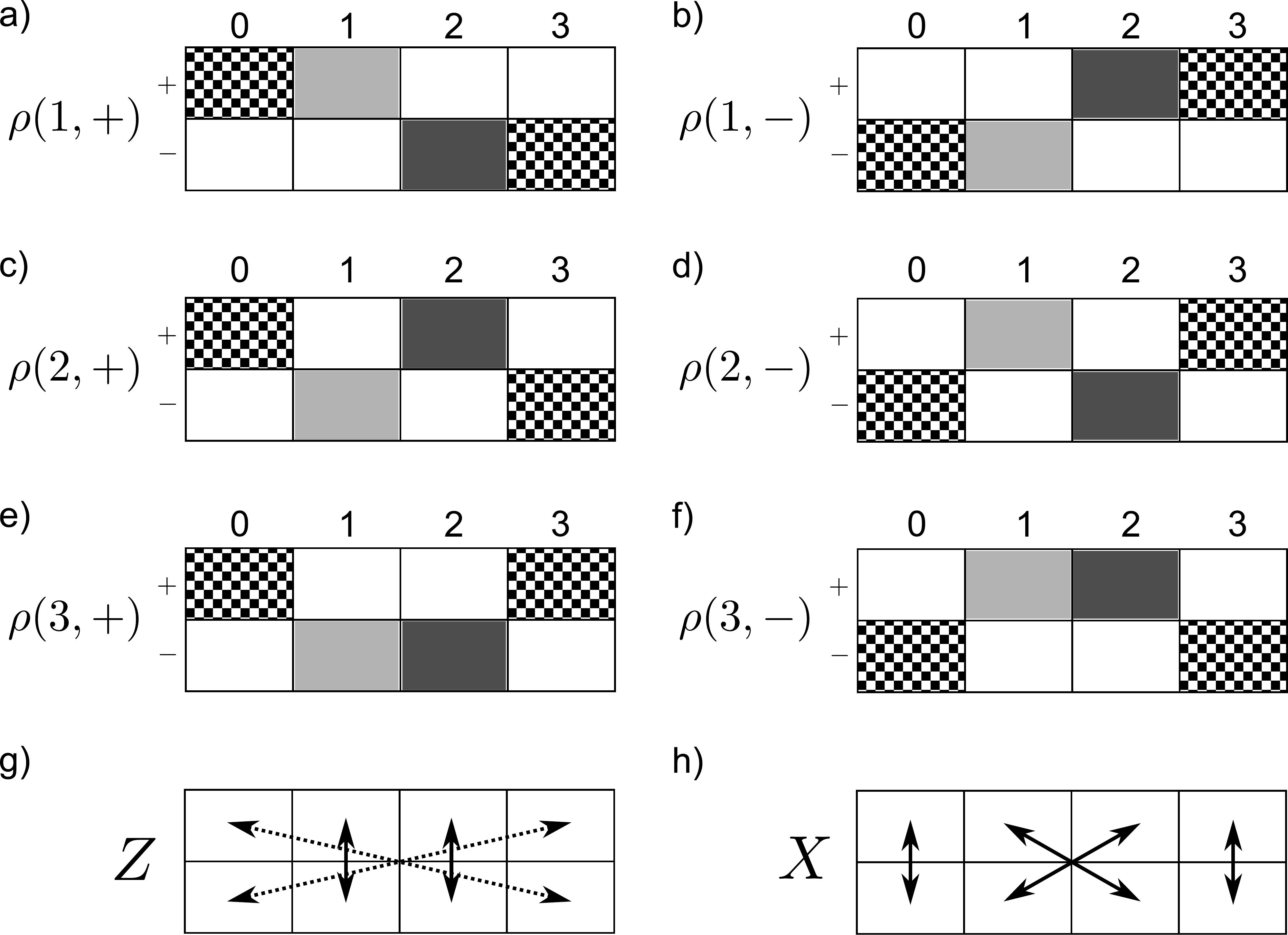}
	\caption{(a)-(f) Probability distributions $\mu_{\rho(j,\gamma)}(\epsilon,a)
$ over eight ontic states $\{(\epsilon,a):\epsilon=\pm,a\in\mbb{Z}_4\}$, for 
the three bases with Bloch vectors as in Eq.~\eqref{eq:C2_bases}. The 
shadings indicate the three nonzero probabilities, $q_0$ (checkered), $q_1$ (
light grey) and $q_2$ (dark grey), that are assigned to the different ontic 
states. (g) and (h) Permutations of the ontic states that effects the unitary 
transformations $Z$ and $X$, respectively. Note that the permutation in (h) 
is only valid for the special case where the bases have D$_2$ symmetry (i.e., 
when $q_1 = q_2$).}\label{fig:C2_transformations}
\end{figure}

To construct a quasiprobability distribution for which these bases are 
non-negative, we need to find vectors $\vec{d}(a)$ satisfying 
Eq.~\eqref{eq:ontic_to_quasi}. One such set of vectors is
\begin{align}
\vec{d}(0) &= \left(0,\csc\phi,\sec\theta\right) \,,\nonumber\\
\vec{d}(1) &= \left(\csc\theta,{-}\csc\phi - \cos\phi\csc\theta,0\right) \,,\nonumber\\
\vec{d}(2) &= \left({-}\csc\theta,{-}\csc\phi + \cos\phi\csc\theta,0\right) \,,\nonumber\\
\vec{d}(3) &= \left(0,\csc\phi,{-}\sec\theta\right) \,.
\end{align}
The corresponding operators $F(\gamma,k)$ are then
\begin{align}\label{eq:C2_operators}
F(\gamma, 0) &= \frac{q_0}{2}\left(\unit + \gamma\csc\phi Y + \gamma\sec\theta
Z \right)	\,,\nonumber\\
F(\gamma, 1) &= \frac{q_1}{2}\left[\unit + \gamma\csc\theta X - \gamma(
\csc\phi + \cos\phi\csc\theta) Y \right]	\,,\nonumber\\
F(\gamma, 2) &= \frac{q_2}{2}\left[\unit - \gamma\csc\theta X - \gamma(
\csc\phi - \cos\phi\csc\theta) Y \right]	\,,\nonumber\\
F(\gamma, 3) &= \frac{q_0}{2}\left(\unit + \gamma\csc\phi Y - \gamma\sec\theta
Z \right)	\,,
\end{align}
for $\gamma = \pm$. These operators define a quasiprobability representation 
that reproduces all of quantum mechanics for the qubit and is non-negative 
for the states and measurements corresponding to the three bases with Bloch 
vectors in Eq.~\eqref{eq:C2_bases}.

\medskip\textit{\textbf{Stabilizer states.}} We now consider the common 
special case of both families of three bases. For this special case, the 
three bases are equivalent to the vertices of a regular octahedron (i.e., 
single qubit stabilizer states) and so have a larger symmetry group, namely, 
the octahedral group (equivalent to the single qubit Clifford group). In the 
appropriate special cases, the distributions $q(\epsilon,a)$ in 
Eq.~\eqref{eq:D3d_probabilities} ($\sin^2\theta = \tfrac{2}{3}$) and 
Eq.~\eqref{eq:C2_distribution} ($\phi=\frac{\pi}{2}$ and $\theta=\frac{\pi}{4}$)
become uniform and $q(\epsilon,a)=\tfrac{1}{4}$ for all $\epsilon,a$.

The single qubit Clifford group is generated by the Hadamard gate $H$ and the 
phase gate $P$. These transformations supervene on the permutations of ontic 
states shown in Fig.~\ref{fig:clifford_transformations} in the basis 
$\vec{r}(1) = \vec{x}$, $\vec{r}(2) = \vec{y}$ and $\vec{r}(3) = \vec{z}$. Thus, 
all Clifford transformations can be viewed as supervening on permutations of 
ontic states in these models.

\begin{figure}[t!]
\centering
	\includegraphics[width=0.8\linewidth]{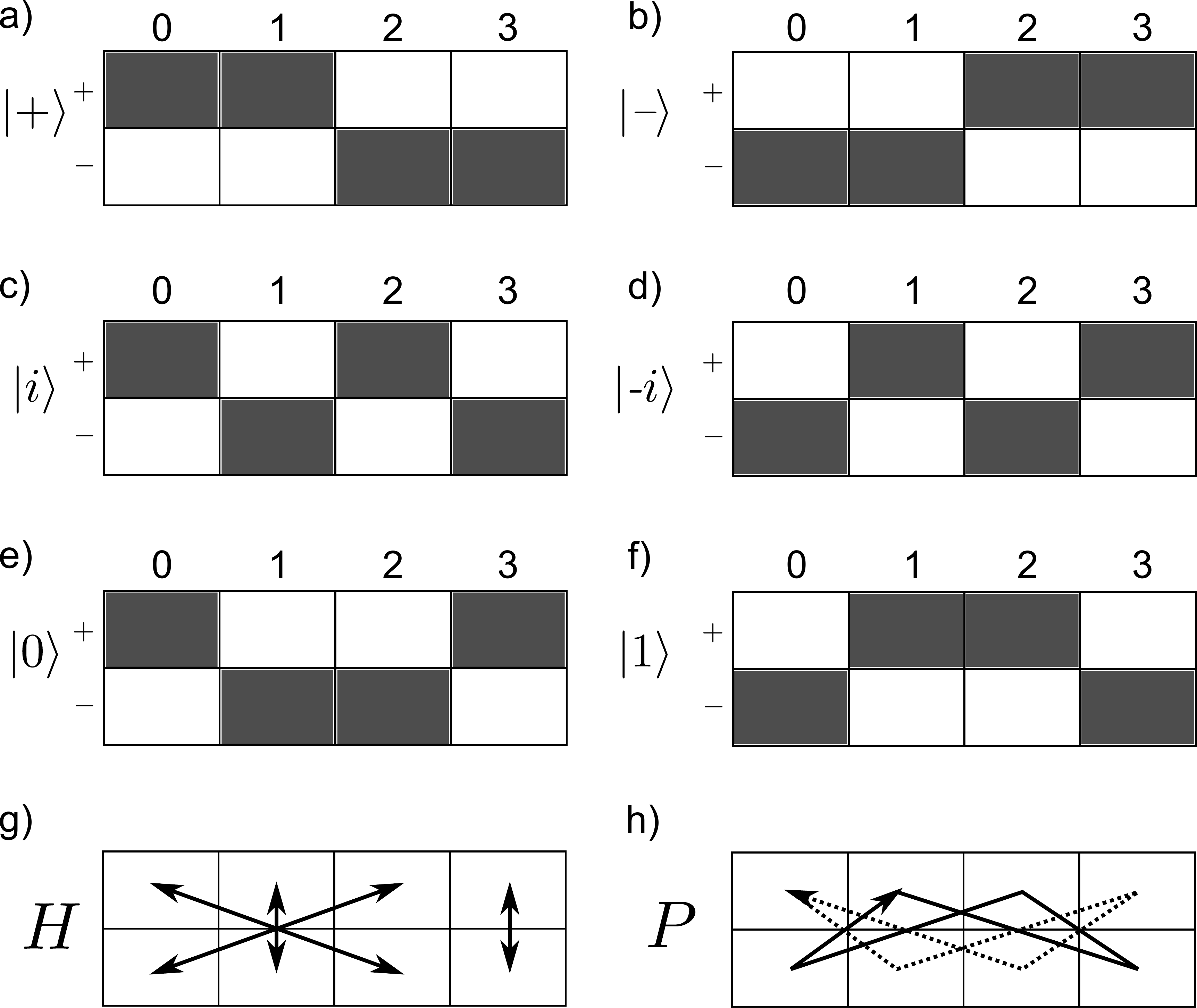}
	\caption{(a)-(f) Probability distributions $\mu_{\rho}(\epsilon,a)$ over 
eight ontic states $\{(\epsilon,a):\epsilon=\pm,a\in\mbb{Z}_4\}$ for the 
single qubit stabilizer states. The shading indicates the ontic states that 
are assigned probability $\tfrac{1}{4}$ for a given stabilizer state. (g) and 
(h) Permutations of the ontic states that effect the unitary transformations 
$H$ and $P$, respectively.}\label{fig:clifford_transformations}
\end{figure}

\subsubsection*{General features of quasiprobability representations with 
three non-negative bases}

We now briefly discuss some of the features of these quasiprobability 
representations, and how they relate to Spekkens' toy 
theory~\cite{Spekkens2007} and the discrete Wigner function~\cite{Gibbons2004}.

Studying non-negative bases that are closed under a group of nontrivial 
unitary transformations has shown that Theorems~\ref{thm:qubit_planar} and 
\ref{thm:qubit_new} do not completely characterize the possible sets of 
single qubit bases that can be non-negative in some quasiprobability 
distribution, as neither theorem excludes the bases in Eq.~\eqref{eq:D3_bases}
for $\sin^2\theta>\frac{8}{9}$ or in Eq.~\eqref{eq:C2_bases} for $1>|\cos\phi
|>\sin\theta$, for which the bases are ``close'' to coplanar. This suggests 
that models for sets of three bases only exist when they are sufficiently far 
from being coplanar. However, as an exception to this, the bases in 
Eq.~\eqref{eq:D3_bases} still admit a noncontextual model when $\sin\theta\to0$, 
i.e, when all three bases are close to being degenerate. 

Note also that whenever a quasiprobability representation can be defined such 
that the three bases are non-negative, then any unitary transformations that 
permute non-negative states can be interpreted as supervening on permutations 
of the ontic states.

For the remainder of this discussion, we focus on our quasiprobability 
representation of the stabilizer states. Our quasiprobability representation 
of stabilizer states is related to the standard definition of the discrete 
Wigner function, though our representation is defined over eight points 
rather than four, as follows. Note that in our quasiprobability 
representation, each non-negative state is uniquely defined by its 
distribution over the reduced phase space $\{(+,a):a\in\mbb{Z}_4\}$. Over 
this reduced phase space, the operators in either 
Eq.~\eqref{eq:D3_distribution} or Eq.~\eqref{eq:C2_operators} 
can be written as
\begin{align}
F(+,a) = c\sum_j \rho(j,\gamma_{j,a}) - d\unit	\,,
\end{align}
for $a\in\mbb{Z}_4$, where $\gamma_{j,a}$ is chosen such that 
$(+,a)\in\supps{\rho(j,\gamma_{j,a})}$. By ignoring the points 
outside the reduced phase space and doing suitable renormalizations 
(i.e., setting $c=d=\frac{1}{2}$), we recover a typical definition 
of a discrete Wigner function~\cite{Gibbons2004}. Note that in this 
setting, the Clifford transformations permute the non-negative basis, 
but do not supervene on permutations of ontic states.

Our non-negative quasiprobability representation of preparations and 
measurements in the stabilizer bases, together with Clifford transformations 
amongst these states, also recovers those of Spekkens' toy 
theory~\cite{Spekkens2007} when restricted to this reduced phase space. 
In Spekkens' theory, one cannot define suitable ontic transformations on 
which all single qubit Clifford transformations supervene. For our theory
restricted to the reduced phase space, this is also the case. For example, 
the permutation of ontic states that effects $\Pi$ [depicted in 
Fig.~\ref{fig:D3_transformations} (h)] mixes the values of $\epsilon$ and 
so cannot be defined on this reduced ontological space. Furthermore, 
anti-unitary transformations can supervene on permutations of ontic states, 
as is the case in the toy theory.

Unlike the toy theory, our model of the stabilizer states over the full phase 
space of 8 points allows all Clifford transformations to supervene on 
permutations of ontic states. One can ask the additional question of whether 
all permutations of ontic states that permute the non-negative states are 
allowed in the theory. If one requires that only unitary transformations 
supervene on ontic permutations, then the answer is no. However, if one 
allows antiunitary transformations, provided they permute the non-negative 
states, then all unitary and antiunitary transformations that permute the 
stabilizer states supervene on permutations of the ontic states. In 
particular, the permutation that maps $\epsilon\to-\epsilon$ and leaves $a$ 
unchanged effects a universal NOT gate.

\subsubsection*{Four non-negative bases}

The Bloch vectors 
\begin{align}\label{eq:four_bases}
\vec{r}(\pm,\pm,\pm) := \left(\pm\cos\phi\sin\theta,\pm\sin\phi\sin\theta,
\pm\cos\theta\right)
\end{align}
for $\theta,\phi\in [0,\tfrac{\pi}{2}]$, corresponding to the vertices of a 
right cuboid give a set of four bases. Theorem~\ref{thm:qubit_new} shows that 
any set of four non-negative bases in an arbitrary quasiprobability 
representation must be of this form (up to unitary equivalence). 

The high degree of symmetry makes it easy to construct a quasiprobability 
representation that is non-negative for the above bases for any values of $
\theta,\phi\in(0,\tfrac{\pi}{2})$. By examining Eq.~\eqref{eq:ontic_to_quasi}, 
it is apparent that up to an overall permutation of $\Lambda$, we have
\begin{align}
\vec{d}_a = \frac{\vec{e}_a}{\vec{e}_a\cdot\vec{r}_{+++}}
\end{align}
where the $\vec{e}_a$ are the canonical basis vectors of $\mathbb{R}^3$.

We therefore consider a quasiprobability representation over six points, 
$\{(\epsilon,a):\epsilon=\pm,a=1,2,3\}$. Solving Eqs.~\eqref{eq:normalizations} 
and \eqref{eq:3_bases_overlaps} for four non-negative bases gives
\begin{align}
q(\epsilon,1) &=: q_1 = 1 + \sin^2\theta\cos2\phi - \cos^2\theta	\,,
\nonumber\\
q(\epsilon,2) &=: q_2 = 1 - \sin^2\theta\cos2\phi - \cos^2\theta	\,,
\nonumber\\
q(\epsilon,3) &=: q_3 = 1 + \cos2\theta	\,.
\end{align}
for $\epsilon=\pm$. Unlike the quasiprobability representation for three 
non-negative bases, this quasiprobability representation is valid for all $\theta,
\phi\in(0,\tfrac{\pi}{2})$, even when the four non-negative bases are 
arbitrarily close to being coplanar.

The point group of the four bases for general $\theta$ and $\phi$ is the 
group of Pauli matrices. The supports of the non-negative states and the 
permutations of ontic states corresponding to $X$ and $Z$ are illustrated in 
Fig.~\ref{fig:support_cube} (a)-(j). 

There are three equivalent cases with slightly higher symmetry, corresponding 
to when two parallel faces of the cuboid are squares. For example, the face in 
the $z$ plane is a square when $\phi=\tfrac{\pi}{4}$. In this case, the bases 
are also invariant under the phase gate, $P$, as shown in 
Fig.~\ref{fig:support_cube} (l). 

In the special case of the cube (which occurs when $\phi=\tfrac{\pi}{4}$, $
\cos\theta=\frac{1}{\sqrt{3}}$), the symmetry group of the four bases is the 
octahedral group. The permutations of ontic states corresponding to two 
generators of the single qubit Clifford group (the Hadamard $H$ and $P$) are 
illustrated in Fig.~\ref{fig:support_cube} (k) and (l) respectively. 

Note that the bases in Eq.~\eqref{eq:D3_bases} with $\sin^2\theta=\frac{8}{9}$
(i.e., $\cos\theta=\pm\frac{1}{3}$) are three of the four bases 
corresponding to the vertices of the cube, so in this limiting case, a fourth 
non-negative basis (with Bloch vectors parallel to the $z$-axis) can be 
added. For the model in Eq.~\eqref{eq:D3d_probabilities} in the limiting case 
$\sin^2\theta=\frac{8}{9}$, $q_0=0$ and so only six points are assigned 
nonzero probability. If the points $(\pm,0)$, to which any non-negative state 
assigns zero probability, are ignored, then the distributions in 
Fig.~\ref{fig:D3_transformations} (a)-(f) are identical 
(up to a relabeling of the states) to those in 
Fig.~\ref{fig:support_cube} (c)-(h). 

\begin{figure}
\centering
	\includegraphics[width=0.95\linewidth]{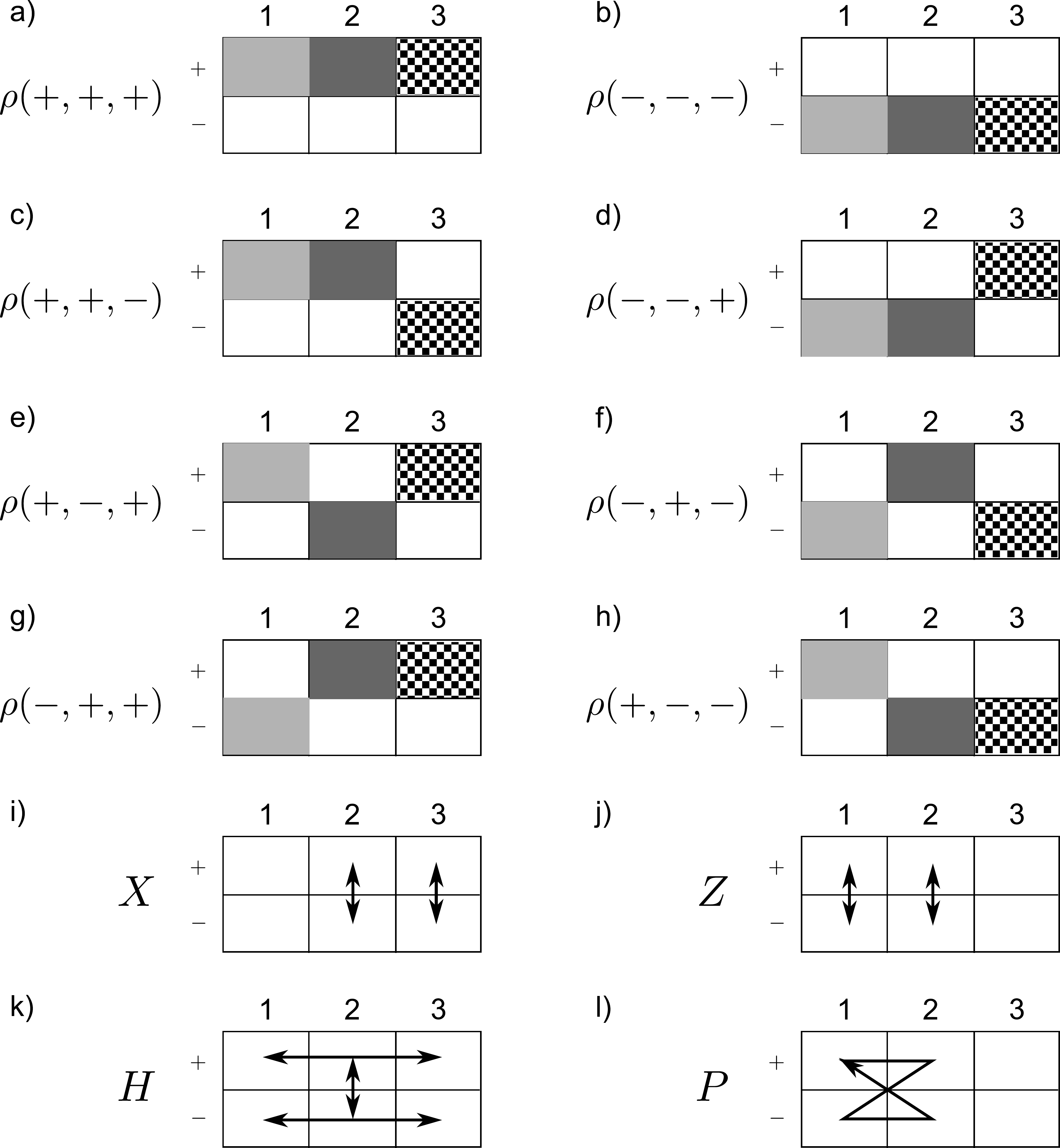}
	\caption{(a)-(h) Probability distributions over 6 points $\{(\gamma,k):
\gamma=\pm,k=1,\ldots,3\}$, for the eight quantum states with Bloch vectors 
as in Eq.~\eqref{eq:four_bases} corresponding to the vertices of a right 
cuboid. The shadings indicate the three nonzero probabilities, $q_1$ (light 
grey), $q_2$ (dark grey) and $q_3$ (checkered), that are assigned to the 
different ontic states. (i)-(l) Permutation of ontic states that effect the 
transformations (i) $X$, (j) $Z$, (k) $H$, and (l) $P$ respectively. Note 
that the permutations in (k) and (l) are only valid for special cases with 
higher symmetry, namely, when $q_1=q_3$ and $q_1=q_2$ respectively.}
\label{fig:support_cube}
\end{figure}

\subsubsection*{Example of bases that do not admit a preparation 
noncontextual model: Icosahedron}

We have provided an exhaustive list of sets of bases with nontrivial point 
groups that are non-negative in some quasiprobability distribution and 
noticed that the single qubit Clifford group (i.e., the symmetry group of a 
regular octahedron) plays a special role.

The octahedral group is one of two groups that permute pairs of antipodal 
points that are the vertices of a Platonic solid, the other being the 
icosahedral group. While Theorems~\ref{thm:qubit_planar} and 
\ref{thm:qubit_new} show that the pairs of antipodal vertices 
corresponding to either the icosahedron (or its dual, the dodecahedron) 
cannot all correspond to non-negative bases, we provide an explicit 
proof of this fact, which is illustrative of the general case.

The idea behind the proof is to use the Bloch vectors corresponding to the 
vertices of an icosahedron to obtain two decompositions of a mixed state $\rho
$ in terms of two different sets of states and then use 
Lemmas~\ref{lem:sum_prob}--\ref{lem:common_support} to show that 
in any quasiprobability representation, some of the quasi-probabilities 
must be negative. 

The twelve vertices of an icosahedron are given by the six vectors
\begin{align}\label{eq:icosahedron}
\vec{r}(0,\alpha) &= \frac{1}{\sqrt{1+\varphi^2}}\left(1, \alpha\varphi, 0
\right) \,,\nonumber\\
\vec{r}(1,\alpha) &= \frac{1}{\sqrt{1+\varphi^2}}\left(0, 1, 
\alpha\varphi\right) \,,\nonumber\\
\vec{r}(2,\alpha) &= \frac{1}{\sqrt{1+\varphi^2}}\left(\alpha\varphi, 0, 1
\right) \,,
\end{align}
for $\alpha = \pm$ and their negatives, where $\varphi = (1+\sqrt{5})/2$ is 
the golden ratio. These vertices satisfy the relation
\begin{align}
(1-2a)\vec{r}(1,+) = a\vec{r}(0,+) - a\vec{r}(0,-) - (1-2a)\vec{r}(1,-)	\,,
\end{align}
where $a = (2 + \varphi)^{-1}$. Therefore we have
\begin{align}
\rho &=	a\rho(0,+,+) + a\rho(0,-,-) + (1-2a)\rho(1,-,-) \nonumber\\
	&= (1-a)\rho(1,+,+) + a\rho(1,+,-)	\,,
\end{align}
where $\rho(j,\alpha,\mu) = \frac{1}{2}(\unit + \mu\vec{r}_{j,\alpha})$. For 
a quasiprobability distribution, this implies
\begin{align}
\mu_{\rho}(\lambda) &=	a\mu_{\rho(0,+,+)}(\lambda) + a\mu_{\rho(0,-,-)}(
\lambda) + (1-2a)\mu_{\rho(1,-,-)}(\lambda) \nonumber\\
	&= (1-a)\mu_{\rho(1,+,+)}(\lambda) + a\mu_{\rho(1,+,-)}(\lambda)	\,,
\end{align}
for all $\lambda\in\Lambda$. Assume that all the bases corresponding to the 
vertices of an icosahedron are non-negative. By Lemma~\ref{lem:common_support},
there exists a $\lambda'\in\supps{\rho(1,+,-)}\cap\supps{\rho(1,-,-)}$. By 
Lemma~\ref{lem:sum_prob}, for this value of $\lambda'$, the first line is at 
least $(1-2a)q(\lambda')>0$, while the second is exactly $aq(\lambda')>0$. As 
$(1-2a)>a$, this is a contradiction. Therefore not all the bases can be 
non-negative.

\section{Quasi-probability representations of qudits}
\label{sec:higher_dimensions}

For qubits, we have shown that no quasiprobability distribution can be 
non-negative for more than four orthonormal bases. Moreover, up to unitary 
equivalence, the only set of four bases that can be non-negative corresponds 
to the vertices of a right cuboid. We now turn our attention to the general 
case and establish restrictions on the relation between non-negative bases in 
an arbitrary quasiprobability representation of $\mc{H}_d$. We will begin by 
constructing a mixed state $\rho$ that plays an analogous role to the states 
in Eq.~\eqref{eq:coplanar_bases_decomposition} and \eqref{eq:decomposition} 
for qubits. We then generalize Theorems~\ref{thm:qubit_planar} and 
\ref{thm:qubit_new} to higher dimensions, which establishes restrictions 
on the relation between any set of non-negative bases. We conclude
this section by obtaining an upper bound of $2^{d^2}$ on the number
of states that are elements of non-negative bases in an arbitrary 
quasiprobability representation.

For higher dimensions, the relationship between states and orthonormal bases 
is not as straightforward as for qubits, where a state can be uniquely 
extended to an orthonormal basis. Moreover, for arbitrary $d$, there is no 
simple map from $\mc{H}_d$ to an intuitive geometric space like the Bloch 
sphere. Therefore we rely on abstract algebraic tools rather than geometric 
ones. We note that this section is quite technical relative to previous 
sections of this paper, and can be skipped upon first reading.

\subsection{Linear algebra and convex geometry}

We now introduce some tools from linear algebra. These tools enable us to 
deduce the existence of mixed states with multiple decompositions in terms of 
a set of non-negative bases if the bases are related in a particular way. 
Such states play an analogous role to the states in 
Eq.~\eqref{eq:coplanar_bases_decomposition} and \eqref{eq:decomposition} for qubits.

Two density matrices, $\rho,\omega\in\mc{B}(\mc{H}_d)$, are \textit{linearly 
independent} (over $\mathbb{R}$) if
\begin{align}
a\rho + b\omega = 0 \Rightarrow a = b = 0	\,,
\end{align}
and are \textit{linearly dependent} otherwise. Note that we are considering 
linear independence over the operator space $\mathcal{B}(\mathcal{H}_d)$, 
rather than the space of pure states. We introduce the term \textit{disparate}
to denote a set $\mc{A}=\{\rho(\alpha,j):\alpha\in\mbb{Z}_N,j\in\mbb{Z}_d\}$ 
of $N$ bases of $\mc{H}_d$ such that any set $\mc{A}'$ obtained from $\mc{A}$ 
by removing one element from each basis and including the identity is a set 
of linearly independent operators. That is, for all $\vec{f}\in\mbb{Z}_d^N$ (
where $f_j$ labels the element of the $j$th basis that is not in $\mc{A}'$),
\begin{align}
\sum_{\alpha,j|j\neq f_{\alpha}} p_{\alpha,j}\rho(\alpha,j) = \frac{b}{d}
\unit	\Rightarrow p_{\alpha,j} = b = 0 \ \forall (\alpha,j)	\,.
\end{align}
The maximum number of disparate bases of $\mc{H}_d$ is $d+1$, as each basis 
contributes $d-1$ linearly independent density operators and including the 
identity (or the remaining element of any of the bases) gives $d^2$ linearly 
independent density operators. 

We denote the \textit{span} of a set of density matrices $\mc{A}=\{\rho(j)\}$ 
over $\mathbb{R}$ by
\begin{align}
\mathcal{T}(\mc{A}) = \{\sum_j c_j \rho(j):c_j\in\mathbb{R}\}	\,.
\end{align}
The convex hull of a set of density matrices $\mc{A}$ is the set
\begin{align}
\mathcal{C}(\mc{A}) = \{\sum_j p_j \rho(j):p_j\geq 0,\sum_j p_j = 1\}	\,.
\end{align}
Denote by $\partial\mathcal{C}(\mc{A})$ the surface of $\mathcal{C}(\mc{A})$ 
with respect to $\mathcal{T}(\mc{A})$. Then, as we now show, if $\mc{A}$ is a 
set of disparate bases, any expansion of any $\rho\in\partial\mathcal{C}(\mc{A})$ 
as a convex combination of the elements of $\mc{A}$ must assign a zero 
coefficient to at least one element of each of the bases in $\mc{A}$.

\begin{lem}\label{lem:convex_hull}
Let $\mc{A}=\{\rho(\alpha,j):j\in\mbb{Z}_d,\alpha\in\mbb{Z}_N\}$ be a set of 
$N$ disparate bases. Then for all $\rho\in\partial\mathcal{C}(\mc{A})$, there 
exist $\vec{f}\in\mbb{Z}_d^N$ and $p_{\alpha,j}\geq 0$ such that
\begin{align}
\rho = \sum_{\alpha,j|j\neq f_{\alpha}} p_{\alpha,j}\rho(\alpha,j)	\,.
\end{align}
\end{lem}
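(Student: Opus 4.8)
My plan is to treat $\mathcal{C}(\mc{A})$ as a polytope inside the affine hyperplane $\{\sigma:\trs{\sigma}=1\}$ of the subspace $\mathcal{T}(\mc{A})$, and to exploit one structural feature shared by all the bases: because each basis is complete, $\frac{1}{d}\sum_{j}\rho(\alpha,j)=\frac{1}{d}\unit$ for every $\alpha$, so the maximally mixed state is the common barycenter of all $N$ basis simplices. Since $\frac{1}{d}\unit=\frac{1}{Nd}\sum_{\alpha,j}\rho(\alpha,j)$ is a convex combination of every generator with strictly positive weight, it is a relative-interior point of $\mathcal{C}(\mc{A})$. The whole argument then reduces to the elementary fact that a relative-interior point cannot lie on any proper face.

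Concretely, I would take $\rho\in\partial\mathcal{C}(\mc{A})$ and invoke the supporting-hyperplane theorem at this relative boundary point: there is a linear functional $\Phi$ on $\mathcal{T}(\mc{A})$ with $\Phi(\sigma)\le c$ for all $\sigma\in\mathcal{C}(\mc{A})$, where equality defines a proper face $\mathcal{F}\ni\rho$ (so $\Phi$ is non-constant on $\mathcal{C}(\mc{A})$). A non-constant linear functional is strictly below its maximum at any relative-interior point, so $\Phi(\tfrac{1}{d}\unit)<c$. But for each fixed $\alpha$, linearity and completeness give
\begin{align}
\frac{1}{d}\sum_{j}\Phi(\rho(\alpha,j)) = \Phi\!\left(\tfrac{1}{d}\unit\right) < c \,.
\end{align}
Since each term satisfies $\Phi(\rho(\alpha,j))\le c$, their average can be strictly below $c$ only if at least one term is; call its index $f_\alpha$, so $\rho(\alpha,f_\alpha)\notin\mathcal{F}$. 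Because $\mathcal{F}$ is the convex hull of exactly those generators on which $\Phi$ attains $c$, the point $\rho\in\mathcal{F}$ admits a convex expansion in those generators; assigning zero coefficient to the excluded ones yields $\rho=\sum_{\alpha,j\mid j\neq f_\alpha}p_{\alpha,j}\rho(\alpha,j)$ with $p_{\alpha,j}\ge 0$, as required.

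The step I expect to be most delicate is the bookkeeping around the word \emph{relative}: the polytope is not full-dimensional in $\mc{B}(\mc{H}_d)$, so ``interior,'' ``boundary,'' and ``supporting functional'' must all be taken inside $\mathcal{T}(\mc{A})$ (equivalently, inside its trace-one slice). This is where the disparateness hypothesis earns its keep: it fixes $\dim\mathcal{T}(\mc{A})=N(d-1)+1$, guaranteeing that $\mathcal{C}(\mc{A})$ is genuinely full-dimensional in that slice so that $\partial\mathcal{C}(\mc{A})$ and the face structure are well behaved; the decomposition conclusion itself, however, uses only the completeness relation $\sum_j\rho(\alpha,j)=\unit$. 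Finally, I note that the same computation proves the stronger claim asserted in the surrounding text — that \emph{every} convex expansion of a boundary point omits an element of each basis — since any expansion placing positive weight on all of some basis $\alpha_0$ could be rewritten, after extracting $m\unit$ with $m=\min_j q_{\alpha_0,j}>0$, as a proper convex combination involving the relative-interior point $\tfrac{1}{d}\unit$, forcing $\rho$ into the relative interior and contradicting $\rho\in\partial\mathcal{C}(\mc{A})$.
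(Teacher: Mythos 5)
Your proof is correct, and it takes a genuinely different route from the paper's. The paper argues by contraposition: starting from a decomposition of $\rho$ that assigns strictly positive weight to every element of some basis, it extracts a multiple of $\unit$ (exactly the move in your closing remark), arranges for every basis to have a strictly positive minimum coefficient, and then exhibits an explicit $\epsilon$ for which the segment from $\rho$ toward an arbitrary point of the affine hull remains in $\mathcal{C}(\mc{A})$, so that $\rho$ lies in the relative interior. Your main argument instead works forward from a supporting hyperplane at the relative boundary point, using the single structural fact that $\tfrac{1}{d}\unit$ is the common barycenter of all $N$ basis simplices and hence a relative-interior point at which any nonconstant supporting functional is strictly below its maximum; averaging the functional over each basis then forces at least one generator per basis off the supporting face, and the face decomposition of $\rho$ does the rest. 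What your route buys: it isolates the only hypothesis actually used (completeness, $\sum_j\rho(\alpha,j)=\unit$), it yields the stronger statement---that \emph{every} convex expansion of a relative boundary point omits an element of each basis---as you note, and it sidesteps a quantifier issue in the paper's construction, where $\delta=\min_{\rho'\in\mathcal{T}(\mc{A})}\min_{\alpha,j}\beta_{\alpha,j}$ is taken over an unbounded set and so needs a normalization or compactness fix to be finite. You are also right that disparateness is not needed for this direction; its role is in the later theorems, where the linear independence of the reduced sets is what makes the resulting decompositions informative. The one point to state carefully is the standard fact you implicitly invoke, that a strictly positive convex combination of all generators of a polytope lies in its relative interior; with that cited, the argument is complete.
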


\begin{proof}
To prove this lemma, we prove the contrapositive. Let $\rho\in\mathcal{C}(\mc{
A})$ be such that all the coefficients $\{p_{\alpha,j}:j\in\mathbb{Z}_d\}$ in 
some decomposition
\begin{align}\label{eq:rho_interior}
\rho = \sum_{\alpha,j} p_{\alpha,j}\rho(\alpha,j)
\end{align}
are nonzero for some value of $\alpha$. For each $\alpha\in\mbb{Z}_N$, we 
can define $f_{\alpha}$ such that
\begin{align}
p_{\alpha,f_{\alpha}} = \min_{j\in\mbb{Z}_d} p_{\alpha,j}	\,.
\end{align}
Then we can rewrite $\rho$ as
\begin{align}
\rho = \sum_{\alpha,j|j\neq f_{\alpha}} (p_{\alpha,j}-p_{\alpha,f_{\alpha}})
\rho(\alpha,j) + \unit\sum_{\alpha} p_{\alpha,f_{\alpha}}	\,.
\end{align}
Therefore we can set all the $p_{\alpha,f_{\alpha}}$ to be equal and strictly 
positive without loss of generality.

We now prove that $\rho$ is in the interior of $\mathcal{C}(\mc{A})$ with 
respect to $\mathcal{T}(\mc{A})$. Let $\rho'\in\mathcal{T}(\mc{A})$. Then, by 
definition, there exists $\beta_{\alpha,j}\in\mathbb{R}$ with $\sum_{\alpha,j}
\beta_{\alpha,j} = 1$ such that
\begin{align}\label{eq:decomposition_rho}
\rho' = \sum_{\alpha,j}\beta_{\alpha,j}\rho(\alpha,j)	\,.
\end{align}
If we set 
\begin{align}
\delta = \min_{\rho'\in\mathcal{T}(\mc{A})}\min_{\alpha,j}\beta_{\alpha,j} <0 
\,,
\end{align}
then for all $\rho'\in\mathcal{T}(\mc{A})$, the coefficients $\epsilon\beta_{
\alpha,j} + (1-\epsilon)p_{ij}$ in the decomposition
\begin{align}
\epsilon\rho' + (1-\epsilon)\rho = \sum_{\alpha,j}(\epsilon\beta_{\alpha,j} + 
(1-\epsilon)p_{\alpha,j})\rho(\alpha,j)	\,,
\end{align}
are non-negative whenever
\begin{align}
\beta_{\alpha,j}\geq0	\quad \text{or} \quad \epsilon \leq \frac{p_{i,j}}{p_{
i,j} - \beta_{\alpha,j}} 	\,.
\end{align}
Therefore if we set
\begin{align}
\epsilon\leq\min_{i,j}\frac{p_{i,j}}{p_{i,j} - \delta}	\,,
\end{align}
then $\epsilon\rho + (1-\epsilon)\rho'\in\mathcal{C}(\mc{A})$ for all $\rho'
\in\mathcal{T}(\mc{A})$.
\end{proof}

Lemma~\ref{lem:convex_hull} shows that any point $\rho$ on the surface of the 
convex hull of a set of disparate bases can be written as
\begin{align}
\rho = \sum_{\alpha,j|j\neq f_{\alpha}} p_{\alpha,j}\rho(\alpha,j)
\end{align}
for some $\vec{f}\in\mbb{Z}_d^N$ and $p_{\alpha,j}\geq 0$. Therefore for all 
pure states $\phi\in\mathcal{T}(\mc{A})$ [which must be on the boundary of 
$\mathcal{T}(\mathcal{A})$], the lines defined by
\begin{align}
\epsilon\phi + \frac{1-\epsilon}{d}\unit
\end{align}
for $\epsilon\in[0,1]$ must intersect $\partial\mathcal{C}(\mc{A})$ as 
$\frac{1}{d}\unit$ is in the interior of $\mathcal{C}(\mc{A})$ and $\phi$ is 
either in $\partial\mathcal{C}(\mc{A})$ or not in $\mathcal{C}(\mc{A})$. 
Therefore for all pure states $\phi\in\mathcal{T}(\mc{A})$, there exists 
$\epsilon>0$, $\vec{f}\in\mbb{Z}_d^N$ and $p_{\alpha,j}\geq 0$ such that
\begin{align}\label{eq:decompose_state}
\rho:= \epsilon\phi + \frac{1-\epsilon}{d}\unit = \sum_{\alpha,j|j\neq f_{\alpha}} p_{\alpha,j}\rho(\alpha,j)	\,.
\end{align}

The state defined in Eq.~\eqref{eq:decompose_state} can be decomposed as a 
convex combination of different bases in multiple ways, which allows us to 
use it to restrict the relation between non-negative bases in an analogous 
way to the states in Eq.~\eqref{eq:coplanar_bases_decomposition} and 
\eqref{eq:decomposition} for qubits. 

\subsection{Relation between non-negative bases for qudits}\label{sec:qudits}

We now use the state $\rho$ defined in Eq.~\eqref{eq:decompose_state} to 
generalize Theorem~\ref{thm:qubit_planar} to qudits.

\begin{thm}\label{thm:qudits_coplanar}
In an arbitrary quasiprobability representation, any set of three mutually 
non-orthogonal non-negative bases must be disparate.
\end{thm}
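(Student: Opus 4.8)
The plan is to argue by contradiction, mirroring the structure of the qubit result (Theorem~\ref{thm:qubit_planar}) and the icosahedron argument, but packaging the constraint so that only \emph{pairwise} support overlaps are needed. Suppose $\{\rho(\alpha,j):\alpha=1,2,3,\,j\in\mbb{Z}_d\}$ are three mutually non-orthogonal non-negative bases that are \emph{not} disparate. By the definition of disparateness there is a choice $\vec f\in\mbb{Z}_d^3$ and coefficients $p_{\alpha,j}$ (for $j\neq f_\alpha$) and $b$, not all zero, with $\sum_{\alpha,j\neq f_\alpha}p_{\alpha,j}\rho(\alpha,j)=\tfrac{b}{d}\unit$. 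Writing $\tfrac{b}{d}\unit=\tfrac{b}{3d}\sum_{\alpha,j}\rho(\alpha,j)$ and collecting terms, this is equivalent to a single operator relation $\sum_{\alpha=1}^3\sum_j c_{\alpha,j}\rho(\alpha,j)=0$. This relation is the qudit analog of the doubly-decomposable states in Eq.~\eqref{eq:coplanar_bases_decomposition}, \eqref{eq:decomposition} and~\eqref{eq:decompose_state}: splitting the $c_{\alpha,j}$ by sign exhibits a mixed state with two convex decompositions over disjoint sets of basis elements. First I would verify that the relation is \emph{nontrivial}, meaning that the coefficients $c_{\alpha,\cdot}$ are not constant in $j$ within every basis. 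The only relations with $c_{\alpha,j}$ constant in $j$ are combinations of the ``basis-sum'' identities $\sum_j\rho(\alpha,j)=\unit$, and a short bookkeeping check shows that constancy would force all $p_{\alpha,j}=b=0$, contradicting our assumption.

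Next I would feed the relation through the quasiprobability map. Applying $\mu_{(\cdot)}(\lambda)$ and invoking Lemma~\ref{lem:sum_prob}, every $\lambda\in\Lambda_*$ lies in exactly one support $\supps{\rho(\alpha,k_\alpha(\lambda))}$ of each basis, with $\mu_{\rho(\alpha,j)}(\lambda)=q(\lambda)\delta_{j,k_\alpha(\lambda)}$. Since $q(\lambda)>0$ on $\Lambda_*$, the operator relation collapses to the pointwise constraint
\begin{align}
c_{1,k_1(\lambda)}+c_{2,k_2(\lambda)}+c_{3,k_3(\lambda)}=0 \qquad \forall\,\lambda\in\Lambda_*\,.
\end{align}
Thus for every triple $(k_1,k_2,k_3)$ that is \emph{realized} by some $\lambda$, the three corresponding coefficients sum to zero.

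The key idea is that I do not need all triples to be realized---only the pairwise overlaps, which are exactly what mutual non-orthogonality guarantees. By Lemma~\ref{lem:common_support}, for every $i,j$ the supports $\supps{\rho(1,i)}$ and $\supps{\rho(2,j)}$ intersect, so the pair $(i,j)$ extends to a realized triple and hence $c_{1,i}+c_{2,j}=-c_{3,m}$ for some $m$. Letting $U_\alpha=\{c_{\alpha,j}:j\in\mbb{Z}_d\}$ be the set of distinct coefficient values of basis $\alpha$, this reads $U_1+U_2\subseteq -U_3$, and symmetrically $U_1+U_3\subseteq -U_2$ and $U_2+U_3\subseteq -U_1$. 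Applying the elementary sumset bound $|A+B|\geq|A|+|B|-1$, valid for finite sets of real numbers, to all three inclusions and adding two of the resulting inequalities forces $|U_\alpha|=1$ for each $\alpha$; that is, each $c_{\alpha,\cdot}$ is constant in $j$. This contradicts the nontriviality established above, so the three bases must be disparate.

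The step I expect to be the main obstacle is realizing that the naive generalization of the qubit argument---trying to realize specific triples of supports differing in a single basis---cannot be carried out, because triple support overlaps need not be nonempty even when all pairwise overlaps are: the triple-overlap tensor $T_{ijm}=\int_{\supps{\rho(1,i)}\cap\supps{\rho(2,j)}\cap\supps{\rho(3,m)}}q$ can have zero entries while its pairwise marginals $\trs{\rho(\alpha,i)\rho(\beta,j)}$ are all strictly positive. The resolution is to recast the pointwise constraint as the three sumset inclusions above, which depend only on pairwise data from Lemma~\ref{lem:common_support}, and then close the argument with the real sumset inequality. A secondary point needing care is the bookkeeping that translates ``not disparate'' into a genuinely nontrivial operator relation and confirms that constancy of every $c_{\alpha,\cdot}$ is precisely the trivial (basis-sum) case.
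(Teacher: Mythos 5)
Your proof is correct, but it takes a genuinely different route from the paper's. The paper's proof first uses non-disparateness to place an element $\rho(2,0)$ of the third basis in the span $\mathcal{T}(\mc{A})$ of the other two, then invokes the convex-geometry machinery of Lemma~\ref{lem:convex_hull} via Eq.~\eqref{eq:decompose_state} to write $\epsilon\rho(2,0)+\frac{1-\epsilon}{d}\unit$ as a convex combination of $\mc{A}$ omitting one element of each of the first two bases; evaluating at a single point $\lambda'$ in the common support of the two omitted elements (guaranteed by Lemma~\ref{lem:common_support} and mutual non-orthogonality) makes the right-hand side vanish while the left-hand side is at least $\frac{1-\epsilon}{d}q(\lambda')>0$. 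You instead work directly with the linear dependency witnessing non-disparateness, normalize it to $\sum_{\alpha,j}c_{\alpha,j}\rho(\alpha,j)=0$, push it through the (linear) representation and Lemma~\ref{lem:sum_prob} to obtain $c_{1,k_1(\lambda)}+c_{2,k_2(\lambda)}+c_{3,k_3(\lambda)}=0$ on $\Lambda_*$, and close with the sumset inclusions $U_\alpha+U_\beta\subseteq-U_\gamma$ --- which require only the pairwise overlaps that Lemma~\ref{lem:common_support} supplies --- together with $|A+B|\geq|A|+|B|-1$. Your nontriviality bookkeeping checks out (constancy of every $c_{\alpha,\cdot}$ does force $p_{\alpha,j}=b=0$), your caution that triple overlaps cannot be assumed is well placed, and the sumset closure is valid. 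What the paper's route buys is reuse: the explicit state and coefficients of Eq.~\eqref{eq:decompose_state} carry the quantitative information exploited later in Theorems~\ref{thm:qudits} and~\ref{thm:qudits_different}. What your route buys is self-containedness: it bypasses Lemma~\ref{lem:convex_hull} entirely and avoids the paper's not-fully-justified relabeling step that places a single element of one basis inside the span of the other two, since you only ever use the raw linear dependency. The trade-off is that your sumset argument, as stated, yields only the qualitative conclusion and would need reworking to recover the $\epsilon=\frac{1}{d+1}$-type constraints of the subsequent theorems.
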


\begin{proof}
Let $\{\rho(\alpha,j):\alpha\in\mbb{Z}_3,j\in\mbb{Z}_d\}$ be a set of three 
mutually non-orthogonal bases that are not disparate and are non-negative in 
some quasiprobability representation. 

As the three bases are not disparate, we can relabel the elements of 
$\{\rho(2,j):j\in\mbb{Z}_d\}$ such that $\rho(2,0)\in\mathcal{T}(\mc{A})$, where 
$\mc{A} = \{\rho(\alpha,j):\alpha\in\mbb{Z}_2,j\in\mbb{Z}_d\}$ is the set of the 
elements of the other two bases.

Therefore we can use the decomposition in Eq.~\eqref{eq:decompose_state} and 
relabel the elements of the first two bases such that $\vec{f} = (0,0)$ to 
obtain
\begin{align}
\epsilon\rho(2,0) + \frac{1-\epsilon}{d}\unit = \sum_{\alpha=0}^1\sum_{j=1}^d 
p_{0,j}\rho(\alpha,j)
\end{align}
for some $\epsilon\in (0,1)$. As $\mu$ is convex-linear, we have
\begin{align}\label{eq:3_nondisparate}
\epsilon\mu_{\rho(2,0)}(\lambda) + \frac{1-\epsilon}{d}q(\lambda) = \sum_{
\alpha=0}^1\sum_{j=1}^d p_{\alpha,j}\mu_{\rho(\alpha,j)}(\lambda)
\end{align}
for all $\lambda\in\Lambda$. As $\rho(0,0)$ and $\rho(1,0)$ are not 
orthogonal, there exists $\lambda'\in\supps{\rho(0,0)}\cap\supps{\rho(1,0)}$ 
by Lemma~\ref{lem:common_support}. At this value of $\lambda'$, the 
right-hand side of Eq.~\eqref{eq:3_nondisparate} is 0 while the left-hand side is 
at least $(1-\epsilon)q(\lambda')>0$, yielding a contradiction.
\end{proof}

\begin{thm}\label{thm:qudits}
In an arbitrary quasiprobability representation, any set of 4 mutually 
non-orthogonal non-negative bases $\{\rho(\alpha,j):\alpha\in\mbb{Z}_4,j\in\mbb{Z}
_d\}$ must either:
\begin{itemize}
\item be disparate; or
\item satisfy the relationship
\begin{align}
\epsilon\rho(3,0) + \frac{1-\epsilon}{d}\unit = \sum_{\alpha=0}^2\sum_{j=1}^d 
p_{\alpha,j}\rho(\alpha,j)
\end{align}
for $\epsilon=\frac{1}{d+1}$ (up to a relabeling of basis states).
\end{itemize}
\end{thm}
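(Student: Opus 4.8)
The plan is to mirror the proof of Theorem~\ref{thm:qubit_new}, with the convex-geometric machinery of Lemma~\ref{lem:convex_hull} playing the role that linear dependence of Bloch vectors played for the qubit. First I would observe that, by Theorem~\ref{thm:qudits_coplanar}, any three of the four mutually non-orthogonal non-negative bases are disparate, so non-disparateness is a genuinely four-body property. Taking $\mc{A}=\{\rho(\alpha,j):\alpha\in\mbb{Z}_3,j\in\mbb{Z}_d\}$ to be the first three bases (disparate, spanning $\mathcal{T}(\mc{A})$), the failure of disparateness for all four produces a nontrivial linear relation coupling the fourth basis to $\mathcal{T}(\mc{A})$. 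After relabeling, I would use this relation together with Eq.~\eqref{eq:decompose_state} to produce a single state $\rho$ admitting two decompositions: a convex combination $\rho=\sum_{j}t_j\rho(3,j)$ over the fourth basis, and a non-negative combination $\rho=\sum_{\alpha=0}^2\sum_{j}p_{\alpha,j}\rho(\alpha,j)$ over the first three, where the boundary property of Lemma~\ref{lem:convex_hull} guarantees one vanishing coefficient per basis. This is the qudit analogue of the two expansions in Eq.~\eqref{eq:decomposition}.

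Next I would feed both decompositions through convex-linearity and Lemma~\ref{lem:sum_prob}. On each support region the first decomposition gives $\mu_\rho(\lambda)=t_{g(\lambda)}\,q(\lambda)$, where $g(\lambda)$ is the unique element of the fourth basis compatible with $\lambda$; the second gives $\mu_\rho(\lambda)=q(\lambda)\sum_{\alpha=0}^2 p_{\alpha,g_\alpha(\lambda)}$. Equating these on $\Lambda_{*}$ yields $\sum_{\alpha=0}^2 p_{\alpha,g_\alpha(\lambda)}=t_{g(\lambda)}$ whenever $q(\lambda)>0$. I would then exploit mutual non-orthogonality exactly as in the qubit proof: choosing $\lambda$ in the intersection of the suppressed (zero-coefficient) supports of two of the first three bases collapses the sum to a single term, and positivity of $\mu_\rho$ (forced by the first decomposition) then guarantees that the third term is active. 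This forces each realized nonzero coefficient $p_{\beta,m}$ to equal one of the fourth-basis weights $t_j$, the direct generalization of $s_j\in\{a,1-a\}$.

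The observation that pins the value of $\epsilon$ is spectral: since the $\{\rho(3,j)\}$ are orthonormal, the coefficients $t_j$ are precisely the eigenvalues of $\rho$. The target relation with $\epsilon=\tfrac{1}{d+1}$ is exactly $\rho=\tfrac{1}{d+1}\bigl(\rho(3,0)+\unit\bigr)$, whose spectrum is $\{\tfrac{2}{d+1},\tfrac{1}{d+1},\dots,\tfrac{1}{d+1}\}$; that is, one distinguished weight $t_0$ with $t_1=\dots=t_{d-1}=:t_{*}$ and $t_0=2t_{*}$. So it suffices to show the support-overlap constraints force all but one $t_j$ to coincide and the exceptional one to be twice as large. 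Combining the two normalizations $\sum_j t_j=1$ and $\sum_{\alpha,j}p_{\alpha,j}=1$ with the $\{t_j\}$-valued constraints on the $p_{\alpha,j}$ reduces this to a finite counting identity, the qudit version of ``$\sum_j s_j=1$ with each $s_j\in\{a,1-a\}$ forces $a\in\{\tfrac13,\tfrac23\}$'' that closed the qubit case.

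The hard part will be the combinatorial bookkeeping in the middle: Lemma~\ref{lem:common_support} only supplies pairwise support overlaps, so I must control which index combinations $(g_0,g_1,g_2,g)$ are actually realized in order to guarantee both that each $p_{\beta,m}$ can be isolated (for every $m$, not just some) and that enough distinct configurations occur to force $t_1=\dots=t_{d-1}$ and $t_0=2t_{*}$. A secondary subtlety is justifying the reduction to the single-distinguished-element form of Eq.~\eqref{eq:decompose_state} with non-negative coefficients: a generic non-disparateness relation only places a \emph{combination} of several fourth-basis elements in $\mathcal{T}(\mc{A})$, and I expect the collapse to one distinguished element to follow from placing $\rho$ on $\partial\mathcal{C}(\mc{A})$ via Lemma~\ref{lem:convex_hull} together with the quasiprobability constraints, rather than from linear algebra alone.
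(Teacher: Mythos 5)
Your overall architecture matches the paper's: reduce to $\rho(3,0)\in\mathcal{T}(\mc{A})$ for the three disparate bases $\mc{A}$, form the state $\rho=\epsilon\rho(3,0)+\frac{1-\epsilon}{d}\unit$ of Eq.~\eqref{eq:decompose_state}, push both decompositions through convex-linearity, and extract constraints at points supplied by Lemma~\ref{lem:common_support}. Two comments on the parts you flag as open. First, the ``single distinguished element'' worry largely dissolves: once non-disparateness is used to place a single relabeled element $\rho(3,0)$ in $\mathcal{T}(\mc{A})$, Eq.~\eqref{eq:decompose_state} is applied to the \emph{pure} state $\phi=\rho(3,0)$, and since $\unit=\sum_j\rho(3,j)$ the fourth-basis weights are automatically $t_0=\epsilon+\frac{1-\epsilon}{d}$ and $t_j=\frac{1-\epsilon}{d}$ for $j\geq1$; there is nothing to prove about $t_1=\cdots=t_{d-1}$ from support constraints. (Whether a generic failure of disparateness really yields a single element of the fourth basis lying in $\mathcal{T}(\mc{A})$, rather than only a combination of several, is a point the paper itself passes over silently; you are right to notice it, but the paper's proof simply asserts it after relabeling.)

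Second, and this is the genuine gap: your plan for pinning $\epsilon$ requires isolating \emph{every} nonzero $p_{\beta,m}$, forcing each into the set $\{t_j\}$, and closing with a counting identity. Lemma~\ref{lem:common_support} only guarantees nonempty \emph{pairwise} intersections of supports; it gives no control over triple intersections such as $\supps{\rho(\beta',0)}\cap\supps{\rho(\gamma',0)}\cap\supps{\rho(\alpha,m)}$ for a prescribed $m$, so you cannot isolate $p_{\alpha,m}$ for every $m$, and the normalization $\sum_{\alpha,j}p_{\alpha,j}=1$ then constrains nothing. The paper closes the argument differently and more economically: (i) at $\lambda'\in\supps{\rho(3,1)}\cap\supps{\rho(\alpha,j)}$ the left-hand side equals $\frac{1-\epsilon}{d}q(\lambda')$ exactly, giving the uniform bound $p_{\alpha,j}\leq\frac{1-\epsilon}{d}$ for \emph{all} $\alpha,j$; (ii) at $\lambda''\in\supps{\rho(3,0)}\cap\supps{\rho(\alpha,0)}$ one gets $\epsilon\leq\frac{1-\epsilon}{d}$; (iii) a point of $\supps{\rho(\beta,0)}\cap\supps{\rho(\gamma,0)}$ shows only that \emph{some} $k_\alpha$ attains $p_{\alpha,k_\alpha}=\frac{1-\epsilon}{d}$; (iv) a point of $\supps{\rho(\beta,k_\beta)}\cap\supps{\rho(\gamma,k_\gamma)}$ then forces the left-hand side to be at least $\frac{2(1-\epsilon)}{d}q$, which requires that point to lie in $\supps{\rho(3,0)}$ and $\epsilon\geq\frac{1-\epsilon}{d}$, whence $\epsilon=\frac{1}{d+1}$. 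You would need to replace your combinatorial bookkeeping with something like steps (i)--(iv); as written, the middle of your argument does not close.
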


\begin{proof}
Let $\{\rho(\alpha,j):\alpha\in\mbb{Z}_4,j\in\mbb{Z}_d\}$ be a set of four 
mutually non-orthogonal bases that are not disparate and are non-negative in 
some quasiprobability representation. 

As the bases are not disparate, then, relabeling the bases as necessary, we 
have $\rho(3,0)\in\mathcal{T}(\mc{A})$, where $\mc{A} = \{\rho(\alpha,j):
\alpha\in\mbb{Z}_3,j\in\mbb{Z}_d\}$, that is, the set of the elements of the 
first three bases, which must be disparate by 
Theorem~\ref{thm:qudits_coplanar}.

Therefore we can use the decomposition in Eq.~\eqref{eq:decompose_state} and 
relabel the bases such that $\vec{f} = (0,0,0)$ to obtain
\begin{align}
\epsilon\rho(3,0) + \frac{1-\epsilon}{d}\unit = \sum_{\alpha=0}^2\sum_{j=1}^d 
p_{0,j}\rho(\alpha,j)	\,.
\end{align}
As $\mu$ is convex-linear, we have
\begin{align}\label{eq:4_nondisparate}
\epsilon\mu_{\rho(3,0)}(\lambda) + \frac{1-\epsilon}{d}q(\lambda) = \sum_{
\alpha=0}^2\sum_{j=1}^d p_{\alpha,j}\mu_{\rho(\alpha,j)}(\lambda)
\end{align}
for all $\lambda\in\Lambda$. 

As $\rho(3,1)$ and $\rho(\alpha,j)$ are not orthogonal for any $\alpha\neq 3$ 
and any $j$, there exists $\lambda'\in\supps{\rho(3,1)}\cap\supps{\rho(\alpha,j)}$
by Lemma~\ref{lem:common_support}. At this value of $\lambda'$, the 
left-hand side of Eq.~\eqref{eq:4_nondisparate} is $\frac{1-\epsilon}{d}q(\lambda')>0$, 
while the right-hand side is at least $p_{\alpha,j}q(\lambda')>0$, so 
$p_{\alpha,j}\leq \frac{1-\epsilon}{d}$ for all $\alpha,j$.

As $\rho(3,0)$ and $\rho(\alpha,0)$ are not orthogonal for any $\alpha\neq 3$,
there exists $\lambda''\in\supps{\rho(3,0)}\cap\supps{\rho(\alpha,0)}$ by 
Lemma~\ref{lem:common_support}. At this value of $\lambda''$, the left-hand 
side of Eq.~\eqref{eq:4_nondisparate} is 
$(\frac{1-\epsilon}{d}+\epsilon)q(\lambda'')>0$, while the right-hand side 
is at most $\frac{1-\epsilon}{d}2q(\lambda'')>0$. 
Therefore $\epsilon\leq\frac{1-\epsilon}{d}$.

We now want to show that for each $\alpha$ there must exist a $k_{\alpha}$ 
such that $p_{\alpha,k_{\alpha}}=\frac{1-\epsilon}{d}$. To do this, let 
$\{\alpha,\beta,\gamma\}$ be a permutation of $\{0,1,2\}$. Then as $\rho(\beta,0)$ 
and $\rho(\gamma,0)$ are not orthogonal, there exists $\lambda_{\alpha}
\in\supps{\rho(\beta,0)}\cap\supps{\rho(\gamma,0)}$. At $\lambda_{\alpha}$, 
the right-hand side of Eq.~\eqref{eq:4_nondisparate} is 
$p_{\alpha,k_{\alpha}}q(\lambda_{\alpha})\leq\frac{1-\epsilon}{d}q(\lambda_{\alpha})$ 
for some value of $k_{\alpha}$. The left-hand side is at least 
$\frac{1-\epsilon}{d}q(\lambda_{\alpha})$, so we have that there exists 
$k_{\alpha}$ such that $p_{\alpha,k_{\alpha}} = \frac{1-\epsilon}{d}$. 
By considering all permutations of $\{0,1,2\}$, we see that this holds 
for all $\alpha$.

We can now show use this same approach to show that $\epsilon=\frac{1}{d+1}$. 
Let $\{\alpha,\beta,\gamma\}$ be a permutation of $\{0,1,2\}$. Then as $\rho(
\beta,k_{\beta})$ and $\rho(\gamma,k_{\gamma})$ are not orthogonal, there 
exists $\lambda'_{\alpha}\in\supps{\rho(\beta,k_{\beta})}\cap\supps{\rho(
\gamma,k_{\gamma})}$. At $\lambda'_{\alpha}$, the right-hand side of 
Eq.~\eqref{eq:4_nondisparate} is at least $\frac{1-\epsilon}{d}2q(\lambda'_{\alpha
})$. The only way the right-hand side can be equal to the left-hand side is 
if $\lambda'_{\alpha}\in\supps{\rho(3,0)}$ and $\epsilon = \frac{1-\epsilon}{d
}$, that is, $\epsilon = \frac{1}{d+1}$.
\end{proof}

We now show that in any quasiprobability distribution, any number $N$ of 
non-negative bases must still satisfy a symmetry constraint. However, as $N$ 
increases, this constraint becomes less restrictive.

\begin{thm}\label{thm:qudits_different}
In an arbitrary quasiprobability representation, any set of $N>4$ mutually 
non-orthogonal non-negative bases $\{\rho(\alpha,j):\alpha\in\mbb{Z}_N,j\in\mbb{Z}_d\}$ 
must either:
\begin{itemize}
\item be disparate; or
\item satisfy the relationship
\begin{align}
\epsilon\rho(3,0) + \frac{1-\epsilon}{d}\unit = \sum_{\alpha=0}^{N-2}\sum_{j=1
}^d p_{\alpha,j}\rho(\alpha,j)
\end{align}
for some $\epsilon\leq\frac{N-3}{N - 3 + d}$ (up to a relabeling of basis 
states).
\end{itemize}
\end{thm}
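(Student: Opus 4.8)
My plan is to run the argument of Theorem~\ref{thm:qudits} almost verbatim, the only structural change being that the decomposing sum now ranges over $N-1$ bases rather than three; the weaker bound $\epsilon\le\frac{N-3}{N-3+d}$ then falls out of counting how many of those bases can contribute at a single ontic point. If the $N$ bases are disparate there is nothing to prove, so I assume they are not. Relabelling the bases and their elements as in Theorem~\ref{thm:qudits}, I arrange that the distinguished element $\rho(N-1,0)$ lies in $\mc{T}(\mc{A})$, where $\mc{A}$ collects the elements of the remaining $N-1$ bases, and then invoke Eq.~\eqref{eq:decompose_state} to obtain
\begin{align}
\epsilon\rho(N-1,0)+\tfrac{1-\epsilon}{d}\unit=\sum_{\alpha=0}^{N-2}\sum_{j=1}^{d}p_{\alpha,j}\rho(\alpha,j)\,,
\end{align}
with all $p_{\alpha,j}\ge 0$, $\epsilon\in(0,1)$, and the $j=0$ element of each basis relabelled out of the sum. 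Applying convex-linearity of $\mu$ turns this into the pointwise identity $\epsilon\mu_{\rho(N-1,0)}(\lambda)+\tfrac{1-\epsilon}{d}q(\lambda)=\sum_{\alpha,j}p_{\alpha,j}\mu_{\rho(\alpha,j)}(\lambda)$, valid for every $\lambda\in\Lambda$.

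The two required estimates then follow from Lemmas~\ref{lem:common_support} and \ref{lem:sum_prob}. First, since the bases are mutually non-orthogonal, $\rho(N-1,1)$ shares support with every $\rho(\alpha,j)$; evaluating the identity at a common point $\lambda'$ annihilates the $\rho(N-1,0)$ term (supports within a basis are disjoint), so the left side is $\tfrac{1-\epsilon}{d}q(\lambda')$ while the right side is at least $p_{\alpha,j}q(\lambda')$, giving $p_{\alpha,j}\le\tfrac{1-\epsilon}{d}$ for all $\alpha,j$. Second, I fix any basis $\alpha_0\le N-2$ and a point $\lambda''\in\supps{\rho(N-1,0)}\cap\supps{\rho(\alpha_0,0)}$. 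There the left side equals $\bigl(\epsilon+\tfrac{1-\epsilon}{d}\bigr)q(\lambda'')$, whereas on the right the basis $\alpha_0$ contributes nothing (its only compatible element $\rho(\alpha_0,0)$ is precisely the one excluded from the sum) and each of the remaining $N-2$ bases contributes at most $\tfrac{1-\epsilon}{d}q(\lambda'')$, by Lemma~\ref{lem:sum_prob} together with the first estimate. Hence $\epsilon+\tfrac{1-\epsilon}{d}\le(N-2)\tfrac{1-\epsilon}{d}$, which rearranges to $\epsilon(N-3+d)\le N-3$, i.e.\ $\epsilon\le\frac{N-3}{N-3+d}$. Specialising to $N=4$ recovers $\epsilon\le\tfrac{1}{d+1}$, consistent with Theorem~\ref{thm:qudits}.

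The step I expect to require the most care is producing the initial decomposition, since Eq.~\eqref{eq:decompose_state} and Lemma~\ref{lem:convex_hull} demand that the bases carrying the sum be \emph{disparate}, whereas no more than $d+1$ bases can be disparate; for $N-1>d+1$ one therefore cannot take all of the remaining bases as $\mc{A}$. I would resolve this by applying the decomposition to a maximal disparate sub-collection $\mc{A}$ of the $N$ bases (of size $M\le d+1$, and $M\ge 3$ since any three are disparate by Theorem~\ref{thm:qudits_coplanar}), whose span contains the distinguished state, and then reinstating the omitted bases in the sum with zero coefficients. Repeating the two estimates with $M$ in place of $N-1$ yields $\epsilon\le\frac{M-2}{M-2+d}$, and since $M\le N-1$ and $x\mapsto\frac{x}{x+d}$ is increasing this is no larger than $\frac{N-3}{N-3+d}$, so the stated bound still holds. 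An equivalent and perhaps tidier route is to induct on $N$ with Theorem~\ref{thm:qudits} as the base case, using the monotonicity of $\frac{N-3}{N-3+d}$ to absorb any sub-collection into the claimed inequality. The one point genuinely needing justification at each stage is that non-disparateness lets us place a \emph{single} basis element in $\mc{T}(\mc{A})$, which is exactly the reduction already carried out for four bases in Theorem~\ref{thm:qudits}.
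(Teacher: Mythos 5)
Your proof follows the paper's argument for Theorem~\ref{thm:qudits_different} essentially verbatim: the same decomposition via Eq.~\eqref{eq:decompose_state} over the remaining bases, the same two support-overlap estimates giving first $p_{\alpha,j}\le\tfrac{1-\epsilon}{d}$ and then $\epsilon+\tfrac{1-\epsilon}{d}\le(N-2)\tfrac{1-\epsilon}{d}$, and the same rearrangement to $\epsilon\le\frac{N-3}{N-3+d}$. Your closing worry is well placed, however: the paper simply takes $\mc{A}$ to be all $M=N-1$ remaining bases, which for $N-1>d+1$ cannot be disparate as Lemma~\ref{lem:convex_hull} requires, so your patch---decomposing over a maximal disparate sub-collection of size $M\le d+1$ containing the distinguished state in its span and using monotonicity of $x\mapsto x/(x+d)$ to recover the stated bound---repairs a gap that the published proof leaves unaddressed.
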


\begin{proof}
Let $\{\rho(\alpha,j):\alpha\in\mbb{Z}_N,j\in\mbb{Z}_d\}$ be a set of $N$ 
mutually non-orthogonal bases that are not disparate and are non-negative in 
some quasiprobability representation. 

Relabeling the bases as necessary, we have $\rho(M,0)\in\mathcal{T}(\mc{A})$, 
where $M=N-1$ and $\mc{A} = \{\rho(\alpha,j):\alpha\in\mbb{Z}_M,j\in\mbb{Z}_d
\}$, i.e., the set of the elements of the first $M$ bases.

Therefore we can use the decomposition in Eq.~\eqref{eq:decompose_state} and 
relabel the bases such that $\vec{f}$ is the zero vector to obtain
\begin{align}
\epsilon\rho(M,0) + \frac{1-\epsilon}{d}\unit = \sum_{\alpha\in\mbb{Z}_M}\sum_{j=1}^d p_{0,j}\rho(\alpha,j)	\,.
\end{align}
As $\mu$ is convex-linear, we have
\begin{align}\label{eq:N_nondisparate}
\epsilon\mu_{\rho(M,0)}(\lambda) + (1-\epsilon)\mu_{\frac{1}{d}\unit}(\lambda)
= \sum_{\alpha\in\mbb{Z}_M}\sum_{j=1}^d p_{\alpha,j}\mu_{\rho(\alpha,j)}(\lambda)
\end{align}
for all $\lambda\in\Lambda$. As $\rho(M,1)$ and $\rho(\alpha,j)$ are not 
orthogonal for all $\alpha\in\mbb{Z}_M$ and $j\in\mbb{Z}_d$, there exists 
$\lambda'\in\supps{\rho(M,1)}\cap\supps{\rho(\alpha,j)}$ by 
Lemma~\ref{lem:common_support}. At this value of $\lambda'$, the 
left-hand side of Eq.~\eqref{eq:N_nondisparate} is 
$\frac{1-\epsilon}{d}q(\lambda')>0$ while the right-hand side is at 
least $p_{\alpha,k}q(\lambda')$. Therefore 
$p_{\alpha,k}\leq\frac{1-\epsilon}{d}$ for all 
$\alpha\in\mbb{Z}_M$ and $k\in\mbb{Z}_d$.

As $\rho(M,0)$ and $\rho(\alpha,0)$ are not orthogonal for all $\alpha\in\mbb{
Z}_M$, there exists $\lambda''\in\supps{\rho(M,0)}\cap\supps{\rho(\alpha,0)}$ 
by Lemma~\ref{lem:common_support}. At this value of $\lambda''$, the 
left-hand side of Eq.~\eqref{eq:N_nondisparate} is $(\frac{1-\epsilon}{d}+\epsilon)
q(\lambda'')>0$ while the right-hand side is at most $(M-1)\frac{1-\epsilon}{d
} q(\lambda'')$.
\end{proof}

\subsection{Upper bound on the number of non-negative bases for qudits}

Theorems~\ref{thm:qudits_coplanar}--\ref{thm:qudits_different} provide strong 
constraints on the relation between any set of non-negative bases in a 
quasiprobability representation. However, it is unclear how to use these 
theorems to obtain an upper bound on the number of non-negative bases in a 
quasiprobability representation. In order to obtain an upper bound (which 
will not be tight), we change tack and exploit the fact that for all 
$\lambda\in\Lambda$ there exists an operator $F(\lambda)$ acting on $\mc{H}_d$ 
such that
\begin{align}\label{eq:convex_operators}
\mu_{\rho}(\lambda) = \tr{\rho F(\lambda)}
\end{align}
for all $\rho\in\mc{B}(\mc{H}_d)$. This will enable us to show that there are 
no more than $2^{d^2}$ states that are elements of a non-negative basis in 
any quasiprobability distribution, without requiring that the bases are 
mutually non-orthogonal. To obtain this bound, we note that any density 
matrix can be written as a linear combination of the $F(\lambda)$~\cite{Montina2006},
so the $F(\lambda)$ must be a basis for the space of operators 
acting on $\mc{H}_d$.

\begin{thm}\label{thm:qudit_final}
For any quasiprobability representation of $\mc{H}_d$, there are no more than 
$2^{d^2}$ states that are elements of non-negative bases.
\end{thm}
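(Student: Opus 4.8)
The plan is to exploit the injectivity of the quasiprobability representation together with the fact, established in Lemma~\ref{lem:sum_prob}, that every non-negative state has a distribution pinned to only two values at each point. First I would use the stated fact that the operators $\{F(\lambda)\}$ span the real vector space of Hermitian operators on $\mc{H}_d$, which has dimension $d^2$. Consequently one can select $d^2$ points $\lambda_1,\dots,\lambda_{d^2}\in\Lambda$ such that $\{F(\lambda_i):i=1,\dots,d^2\}$ is a basis for this space.

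Next I would show that the evaluation map $\Phi:\rho\mapsto\bigl(\mu_{\rho}(\lambda_1),\dots,\mu_{\rho}(\lambda_{d^2})\bigr)=\bigl(\tr{\rho F(\lambda_i)}\bigr)_i$ is injective on Hermitian operators, and hence on density matrices. This follows because the trace form $\langle A,B\rangle=\tr{AB}$ is a nondegenerate bilinear form on the Hermitian operators, so a basis $\{F(\lambda_i)\}$ gives rise to linearly independent functionals $\tr{\cdot\,F(\lambda_i)}$ whose common kernel is trivial. Thus any state is uniquely determined by the vector $\Phi(\rho)$.

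The crucial step is to count how many distinct vectors $\Phi(\rho)$ can arise as $\rho$ ranges over the states that are elements of non-negative bases. By Lemma~\ref{lem:sum_prob}, the function $q(\lambda)=d\cdot\mu_{\frac{1}{d}\unit}(\lambda)$ is the \emph{same} for every non-negative basis, and any non-negative state $\rho$ satisfies $\mu_{\rho}(\lambda)\in\{0,q(\lambda)\}$ for all $\lambda$. In particular, at each chosen point $\lambda_i$ the value $\mu_{\rho}(\lambda_i)$ is one of at most two numbers, $0$ or $q(\lambda_i)$. Hence $\Phi(\rho)$ can take at most $2^{d^2}$ distinct values, and combining this with the injectivity of $\Phi$ yields the bound of at most $2^{d^2}$ such states.

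The step requiring the most care is the selection of the $d^2$ separating points and the resulting injectivity of $\Phi$; the counting itself is immediate once one observes that $q$ is universal across all non-negative bases, so that each non-negative state is pinned independently to the binary value set $\{0,q(\lambda_i)\}$ at every evaluation point. A minor subtlety is that if $q(\lambda_i)=0$ the two options coincide, but this only decreases the count and so does not threaten the bound. Note also that, unlike Theorems~\ref{thm:qudits_coplanar}--\ref{thm:qudits_different}, this argument makes no use of mutual non-orthogonality, so the bound applies to arbitrary collections of non-negative bases.
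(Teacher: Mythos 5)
Your proposal is correct and follows essentially the same route as the paper's proof: select $d^2$ linearly independent operators $F(\lambda_\beta)$, observe that the evaluation map $\rho\mapsto(\mu_\rho(\lambda_\beta))_\beta$ is injective (you argue via nondegeneracy of the trace form, the paper via invertibility of the coefficient matrix in a trace-orthonormal basis, which is the same fact), and then count using Lemma~\ref{lem:sum_prob} that each non-negative state takes only the values $0$ or $q(\lambda_\beta)$ at each point. Your remarks on the $q(\lambda_i)=0$ degeneracy and the absence of any non-orthogonality assumption are consistent with, and slightly more explicit than, the paper's treatment.
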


\begin{proof}
Let $\{\Pi_{\alpha}:\alpha\in\mbb{Z}_{d^2}\}$ be a trace-orthonormal basis of 
$\mc{B}(\mc{H}_d)$ and let $\{F(\lambda_{\beta}):\beta\in\mbb{Z}_{d^2}\}$ be 
a set of $d^2$ linearly independent operators, which must exist as the set 
$\{F(\lambda)\}$ is a basis for the space of operators acting on $\mc{H}_d$. 
Then, for all $\rho\in\mc{B}(\mc{H}_d)$ and $\beta\in\mbb{Z}_{d^2}$, we can 
write
\begin{align}
\rho &= \sum_{\alpha\in\mbb{Z}_{d^2}} g_{\alpha}(\rho)\Pi_{\alpha}	\,,
\nonumber\\
F(\lambda_{\beta}) &= \sum_{\alpha\in\mbb{Z}_{d^2}} f_{\alpha,\beta}\Pi_{
\alpha}	\,,
\end{align}
where 2 states $\rho,\rho'\in\mc{B}(\mc{H}_d)$ have the same coefficients 
$g_{\alpha}$ for all $\alpha$ if and only if $\rho=\rho'$. Therefore we can 
rewrite Eq.~\eqref{eq:convex_operators} as
\begin{align}
\mu_{\rho}(\lambda_{\beta}) = \sum_{\alpha} f_{\alpha,\beta}g_{\alpha}(\rho)	
\,,
\end{align}
As the $\{F(\lambda_{\beta}):\beta\in\mbb{Z}_{d^2}\}$ are linearly 
independent, $f_{\alpha,\beta}$ must be invertible. Therefore for any set of 
values $\{\mu(\lambda_{\beta}):\beta\in\mbb{Z}_{d^2}\}$, there can be at most 
one state $\rho$ such that $\mu_{\rho}(\lambda_{\beta})=\mu(\lambda_{\beta})$ 
for all $\beta\in\mbb{Z}_{d^2}$. 

From Lemma~\ref{lem:sum_prob}, any state that is an element of a non-negative 
basis can only assign one of two values to any point $\lambda\in\Lambda$, 
namely, 0 or $q(\lambda)$. Therefore there are only $2^{d^2}$ possible sets 
of values of $\mu$ over $\{\lambda_{\beta}:\beta\in\mbb{Z}_{d^2}\}$ that 
correspond to elements of non-negative bases.
\end{proof}

The bound on the number of states that are elements of a non-negative basis 
in Theorem~\ref{thm:qudit_final} is not tight. For example, not all vectors $g$ 
correspond to a valid density operator. In particular, no quantum state can 
have $g_{\alpha}=0$ for all $\alpha$. Furthermore, if an element of a 
non-negative basis assigned nonzero probability to more than $d^2-d+1$ of the 
points $\{\lambda_{\beta}\}$, then as the elements of a non-negative basis 
have disjoint support, at least one of the other elements of a non-negative 
basis would have to assign zero probability to all of the points $\{\lambda_{\beta}\}$
and so would have $g_{\alpha}=0$ for all $\alpha$. Therefore all 
non-negative states assign nonzero probability to between 1 and $d^2 - d + 1$ 
of the points $\{\lambda_{\beta}\}$.

However, even accounting for this does not substantially decrease the upper 
bound. Furthermore, for qubits, we proved in Theorem~\ref{thm:qubit_new} that 
no more than 8 states can be elements of a non-negative basis. The upper 
bound from Theorem~\ref{thm:qudit_final} is 16 states, and even excluding the 
combinations of $\mu$ discussed above only reduces the upper bound to 14 
states.

\section{Discussion and conclusion}\label{negativity:conclusion}

We have shown that for any quasiprobability representation of a qubit, any 
three non-negative bases cannot be coplanar in the Bloch sphere (i.e., they 
must be disparate). Moreover, if there are four non-negative bases, then they 
must correspond to the vertices of a right cuboid circumscribed by the Bloch 
sphere. We provided an exhaustive list of all ``classical'' subtheories of a 
qubit that include states, measurements and nontrivial transformations. 
These cases revealed several interesting features. Both families of three 
bases that are permuted by a nontrivial unitary group can only be 
non-negative in a quasiprobability representation when they are sufficiently 
``far'' from being coplanar. However, there is an exception to this behavior, 
as the bases in Eq.~\eqref{eq:D3_bases} are non-negative in some 
quasiprobability representation even when $\theta\to0$ (i.e., the three bases 
are almost degenerate). We have also found that whenever a subtheory of qubit 
states and measurements are non-negative in some quasiprobability 
representation, there exists a quasiprobability representation in which all 
unitary transformations that permute non-negative states correspond to a 
permutation of the ontic states. 

While we have primarily focused on the qubit case, we have also shown that 
the results for qubits directly generalize in that any three mutually 
non-orthogonal non-negative bases in a quasiprobability distribution must be 
disparate and any four or more mutually non-orthogonal non-negative bases 
must either by disparate or satisfy a symmetry constraint. In this sense, 
quantum states and measurements with a small amount of complementarity can be 
quite difficult to model in a classical theory. In addition, we have obtained 
an upper bound of $2^{d^2}$ on the number of states that are elements of a 
non-negative basis. 

We conclude with some discussion of the implications of our results for 
quantum computation, and some future research directions. While our results 
have been presented in the context of single qudits, they are equally 
applicable to multiple qudit systems. Our upper bound on the number of 
non-negative basis states of a qudit, although quite loose, suggests that 
universal quantum computation leads to negativity in \textit{any} 
quasiprobability distribution. This matches the intuition obtained from the 
specific case of the single qudit discrete Wigner function~\cite{Cormick2006}.

While our higher-dimensional results can be applied to quasiprobability 
representations of multiple qubit systems, it is not clear how this approach 
accords with classical simulations of quantum systems. In particular, 
multi-qubit stabilizers can be efficiently simulated classically~\cite{Aaronson2004}
and yet do not correspond to a set of non-negative bases in any 
quasiprobability representation. To see this, note that stabilizer states and 
$X$ and $Y$ measurements (i.e., in bases corresponding to stabilizer states) 
can lead to violations of a Bell inequality~\cite{Mermin1990} and so cannot 
admit a locally causal model.

A natural way of generalizing a quasiprobability representation for a single 
qubit to one for multiple qubits is to take tensor products of the operators
$\{F(\lambda)\}$ and $\{G(\lambda)\}$ that define the single qubit 
quasiprobability representation via Eq.~\eqref{eq:quasi_F} and 
\eqref{eq:quasi_G}. By construction, such a quasiprobability representation 
will be non-negative for all tensor products of the single qubit states
with non-negative distributions, but may also be non-negative for 
other bases that include entangled states. For the entangled states to 
be accessible in a classical subtheory, there must be some unitary that 
permutes non-negative bases and maps a non-negative product basis to a
non-negative basis that contains an entangled state. Such a unitary can 
only be viewed as supervening on a permutation of ontic states (which 
could always be done for a single qubit) if it leaves the set of tensor 
products of the $\{F(\lambda)\}$ invariant under conjugation. Unfortunately,
it is unclear whether such unitaries exist for any set of operators that
define our single qubit quasiprobability representations, although based
on the results of \cite{Clark2007} we have some evidence to suggest that
they do not.

\begin{acknowledgments}
We acknowledge helpful discussions with Chris Ferrie, Chris Fuchs, Markus 
Mueller and Rob Spekkens, and financial support from the Australian Research 
Council via the Centre of Excellence in Engineered Quantum Systems (EQuS), 
project number CE11001013. 
\end{acknowledgments}

\end{document}